\let\doendproof\endproof
\renewcommand\endproof{~\hfill$\qed$\doendproof}
\author{%
  Kunihiro Wasa\inst{1} \and Takeaki Uno\inst{1}
}
\institute{%
  National Institute of Informatics, Tokyo, Japan\\
  \texttt{\{wasa, uno\}@nii.ac.jp}
}
\newcommand{\name}[1]{\emph{#1}}
\newcommand{\set}[1]{\left\{#1\right\}}
\newcommand{\inset}[2]{\left\{#1 \;\middle|\; #2\right\}}
\newcommand{\len}[1]{\left|#1\right|}
\newcommand{\size}[1]{\len{#1}}
\newcommand{\tuple}[1]{\left(#1\right)}
\newcommand{\Order}[1]{\mathcal{O}\tuple{#1}}
    \setlist[enumerate,1]{label=(\Roman*), leftmargin=*}
    \setlist[enumerate,2]{label=(\Roman{enumi}.\alph*), leftmargin=*}
    \setlist[enumerate,3]{label=(\Roman{enumi}.\alph{enumii}.\arabic*), leftmargin=*}
\newcommand{\dist}[1]{dist\tuple{#1}}
\newcommand{\Lg}[1]{L\tuple{#1}}
\newcommand{\Rg}[1]{R\tuple{#1}}
\newcommand{\NE}[1]{N'\tuple{#1}}
\newcommand{\NEc}[1]{N'_c\tuple{#1}}
\newcommand{\sSOL}[1]{\mathcal{B}^{E}\tuple{#1}}
\newcommand{\iSOL}[1]{\mathcal{B}^{V}\tuple{#1}}
\newcommand{\Addv}[3]{\Gamma\tuple{#1, #2, #3}}
\newcommand{\CAND}[2]{\mathit{C}\tuple{#1, #2}}
\newcommand{\CANDR}[2]{\mathit{C}_R\tuple{#1, #2}}
\newcommand{\CANDL}[2]{\mathit{C}_L\tuple{#1, #2}}
\newcommand{\CANDE}[2]{\mathit{C}'\tuple{#1, #2}}
\newcommand{\DELE}[1]{\mathit{Del}'\tuple{#1}}
\newcommand{\LN}[1]{\mathit{LN}\tuple{#1}}
\newcommand{\VSN}[1]{\mathit{VSN}\tuple{#1}}
\newcommand{\VSNc}[1]{\mathit{VSN_C}\tuple{#1}}
\newcommand{\VSNnc}[1]{\mathit{VSN_{\overline{C}}}\tuple{#1}}
\newcommand{\USN}[1]{\mathit{USN}\tuple{#1}}
\newcommand{\FlagInName}{\textsc{FlagIn}}
\newcommand{\FlagIn}[1]{\FlagInName\tuple{#1}}
\newcommand{\InC}{\textsc{InC}}
\newcommand{\NotInC}{\textsc{NotInC}}
\title{Efficient Enumeration of Bipartite Subgraphs in Graphs}
\begin{document}
\maketitle

\begin{abstract}
    Subgraph enumeration problems ask to output all subgraphs of an input graph that belongs to the specified graph class or satisfy the given constraint.  
    These problems have been widely studied in theoretical computer science. 
    As far, 
    many efficient enumeration algorithms for the fundamental substructures such as spanning trees, cycles, and paths, 
    have been developed. 
    This paper addresses the enumeration problem of bipartite subgraphs.
    Even though bipartite graphs are quite fundamental and have numerous applications in both theory and application, its enumeration algorithms have not been intensively studied, to the best of our knowledge.
    We propose the first non-trivial algorithms for enumerating all bipartite subgraphs in a given graph. 
    As the main results, 
    we develop two efficient algorithms: 
    the one enumerates all bipartite induced subgraphs of a graph with degeneracy $k$ in 
    $\Order{k}$ time per solution. 
    The other enumerates all bipartite subgraphs in $\Order{1}$ time per solution. 
    \keywords{Graph algorithms, subgraph enumeration, bipartite graphs, constant delay, binary partition method, degeneracy}
\end{abstract}

\section{Introduction}

A \name{subgraph enumeration problem} is, for given a graph $G$ and a constraint $R$, to output all subgraphs of $G$ that satisfy $R$ once for each and without duplication.
An example is to enumerate all the trees in the given graph, and all the subgraphs whose minimum degree is at least $k$.
The complexity and polynomiality of the subgraph enumeration have been intensively studied in theoretical computer science in the terms of both output size sensitivity and input size sensitivity.
Compared to optimization approach, enumeration has an advantage on exploring and investigating all possibilities and all aspects of the data, thus is widely studied in a practical point of view,  e.g. Bioinformatics~\cite{Ahmed:Neville:Rossi:Duffield:2015}, machine learning~\cite{Ruggieri:2017}, and data mining~\cite{Uno:2004,Zaki:2000}. 
We say that 
an enumeration algorithm is efficient if the algorithm is \name{output sensitive}~\cite{Johnson1988}. 
Especially, we say that $\mathcal{A}$ runs in \name{polynomial amortized time}, if the total running time of an enumeration algorithm $\mathcal{A}$ is $\Order{N\cdot poly(n)}$ time, 
where $N$ is the number of solutions, $n$ is the size of input, and $poly$ is a polynomial function. 
That is, $\mathcal{A}$ enumerates all solutions in $poly(n)$ time per solution. 
Such algorithms have been considered to be efficient, and one of our research goals is to develop efficient enumeration algorithms.
As far, 
there have been studied enumeration algorithms for many fundamental graph structures such as 
spanning trees~\cite{Read:Tarjan:1975,Shioura:Tamura:Uno:1997}, 
$st$-paths~\cite{Read:Tarjan:1975,Ferreira:Grossi:Rizzi:2011}, 
cycles~\cite{Read:Tarjan:1975,Birmele:2012,Ferreira:2014}, 
maximal cliques~\cite{Makino2004,Eppstein:2011,Conte:Grossi:Marino:Versari:2016}, 
minimal dominating sets~\cite{Kante2011}, 
and so on. 
See the comprehensive list in~\cite{Wasa:2016} of this area. 
Recently, Uno~\cite{Uno:2015} developed a technique for a fine-grained analysis of enumeration algorithms. 

Bipartite graph is a well-known fundamental graph structure. 
A bipartite graph is a graph containing no cycle of odd length, that is, 
whose vertex set can be partitioned into two disjoint independent sets. 
Bipartite graphs widely appears in real-world graphs such as 
itemset mining~\cite{Zaki:2000,Uno:2004}, 
chemical information~\cite{Koichi:Arisaka:Koshino:Aoki:Iwata:Uno:Satoh:2014}, 
Bioinformatics~\cite{Zhang2014}, 
and so on. 
Further, enumeration problems for matchings~\cite{Gely:Nourine:Sadi:2009,Uno:1997,Fukuda:Matui:1994} and bicqliue~\cite{Makino2004,Dias:Figueiredo:Szwarcfiter:2005} in bipartite graphs are well studied. 
However, 
to the best of our knowledge, 
there has been proposed no non-trivial enumeration algorithm for bipartite subgraphs. 

In this paper, we propose efficient enumeration algorithms for 
bipartite induced subgraphs and bipartite subgraphs. 
For enumerating both substructures, 
we employ a simple binary partition method, and develop a data structure for efficiently updating the candidates that are called \emph{child generators}. 
Intuitively speaking, 
child generators are vertices or edges such that adding them to a current solution generates another solution. 
For bipartite induced subgraph, 
we look at the \name{degeneracy}~\cite{Lick:White:1970} of a graph. 
The degeneracy of a graph is the upper bound of the minimum degree of any its subgraph, so the graph is sparse when the degeneracy is small.
It is a widely considered as a sparsity measure~\cite{Wasa:Arimura:Uno:2014,Conte:Grossi:Marino:Versari:2016,Eppstein2010,Xu2015}.   
There are several graph classes have constant degeneracies, e.g.,  
forests, grid graphs, planar graphs, bounded treewidth graphs, $H$-minor free graphs with some fixed $H$, and so on~\cite{Lick:White:1970}. 
In addition, 
Real-world graphs such as road networks, social networks, and internet networks are said to often have small degeneracies, or do so after removing a small part of vertices.
Our algorithm utilizes a \emph{good} ordering on the vertices called a \name{degeneracy ordering}~\cite{Matula:Beck:1983}, 
that achieves $\Order{k}$ amortized time per solution, where $k$ is the degeneracy of an input graph. 
This implies that when we restrict the class of input graphs, such as planar graphs, 
the algorithm runs in constant time per solution and is optimal in the sense of time complexity. 
Next, 
for developing an algorithm for bipartite induced subgraph, 
we show that we can avoid redundant edge additions and removal to obtain a solution from another solution. 
As a main result, 
we give an optimal enumeration algorithm, 
that is, the algorithm runs in constant time per solution. 
These algorithms are quite simple, but by giving non trivial analysis, 
we show the algorithms are efficient. 
These are the first non-trivial efficient enumeration algorithms for bipartite subgraphs.



\section{Preliminaries}
Let $G = (V, E)$ be an \name{undirected graph} with vertex set $V=\{1,\ldots,n\}$ and edge set $E=\{(e_1,\ldots,e_m\} \subseteq V\times V$. 
An edge is denoted by $e = \tuple{u, v}$.
We say that $u$ and $v$ are \name{endpoints} of $e = \tuple{u, v}$, and 
$u$ is \name{adjacent} to $v$ if $(u, v) \in E$. 
When the graph is undirected, $\tuple{u, v} = \tuple{v, u}$. 
Two edges are said to be adjacent to each other if a vertex is an end point of both edges. 
The set of \name{neighbors} of $v$ is the set of vertices that are adjacent to $v$ and is denoted by $N(v)$. 
For any vertex subset $S$ of $V$, 
$E[S] = E \cap (S \times S)$, 
that is, $E[S]$ is the set of edges whose both endpoints are in $S$.  
For any edge subset $F$ of $E$, 
$V[F] = \inset{v \in V}{\exists e \in F (v \in e)}$, 
that is, $V[F]$ is the collection of endpoints of edges in $F$. 
The \name{induced graph} of $G$ by $S$ is$ (S, E[S])$ and is denoted by $G[S]$.
$G[F] = (V[F], F)$ is a \name{subgraph} of $G$ by $F$. 
We denote by $G\setminus S = G[V\setminus S]$. 
Since $G[S]$ (resp. $G[F]$) is uniquely determined by $S$ (resp. $F$), 
we identify $S$ with $G[S]$ (resp. $F$ with $G[F]$) if no confusion arises. 

We say that  
a sequence $\pi = (v = w_1, \dots, w_{\ell} = u)$ of vertices in $V$ is a \name{path} of $G$ between $v$ and $u$ 
if for each $i = 1, \dots, \ell-1$, $(w_i, w_{i+1}) \in E$, and 
each vertex in $\pi$ appears exactly once. 
We denote by the \name{length} of a path the number of edges in the path. 
$\pi$ is a cycle if $v = u$ and the length of $\pi$ is at least three. 
The \name{distance} $\dist{u, v}$ between $u$ and $v$ is the 
We say $G$ is \name{connected} if there is a path between any pair of vertices in $G$. 
$G$ is \name{bipartite} if $G$ has no cycle with odd length. 
For a vertex subset $S\subseteq V$ (resp. an edge subset $F \subseteq E$) such that $G[S]$ (resp. $G[F]$) is bipartite, 
we say $S$ (resp. $F$) a \name{bipartite vertex set} (resp. a \name{bipartite edge set}). 
For any bipartite vertex set $S$, 
if $G[S]$ is connected, we say $S$ a \name{connected bipartite vertex set}. 
We also say a bipartite edge set $F$ is a \name{connected bipartite edge set} if $G[F]$ is connected. 
Let $\iSOL{G}$ and $\sSOL{G}$ be the collection of connected bipartite vertex sets and connected bipartite edge sets, respectively. 
We call $\iSOL{G}$ (resp. $\sSOL{G}$) the \name{solution space} for Problem~\ref{prob:ind} (resp. for Problem~\ref{prob:sub}). 
Since we only focus on connected ones, 
we simply call a connected bipartite vertex (resp. edge) set a bipartite vertex (resp. edge) set.  
In what follows, 
we assume that $G$ is connected and simple. 
We now define the enumeration problems of this paper as follows: 

\begin{problem}[Bipartite induced subgraph enumeration]
\label{prob:ind}
For given a graph $G$, output all vertex sets in $\iSOL{G}$ without duplication. 
\end{problem}

\begin{problem}[Bipartite subgraph enumeration]
\label{prob:sub}
For given a graph $G$, output all subgraphs in $\sSOL{G}$ without duplication. 
\end{problem}

\section{Enumeration of Bipartite Induced Subgraphs}

\begin{algorithm}[t]
    \caption{Enumeration algorithm based on binary method}
    \label{alg:proposed}
    \Procedure(){\Main{$G = (V, E)$}} {
        \ForEach{$v \in V$}{
            \Rec{$G, \set{v}, N(v)$}\; 
            $G \gets G\setminus\set{v}$\; 
        }
    }
    \Subprocedure(){\Rec{$G, S, \CAND{S}{G}$}} {
        \Output{$S$}\; 
        \While{$\CAND{S}{G} \neq \emptyset$}{ \label{alg:proposed:main:loop:start}
            $u \gets $ the smallest child generator in $\CAND{S}{G}$\;
            $\CAND{S}{G} \gets \CAND{S}{G} \setminus \set{u}$\;\label{alg:proposed:remove:from:C}
            $S' \gets S \cup \set{u}$\; 
            \Rec{$G, S', \text{\ComputeChildGenerator{$\CAND{S}{G}, u, G$}}$}\; \label{alg:proposed:call:rec}
            $G \gets G \setminus \set{u}$\;\label{alg:proposed:remove:from:G} \label{alg:proposed:main:loop:end}
        }
    }
    \Subprocedure(){\ComputeChildGenerator{$\CAND{S}{G}, u, G$}}{
        \If{$u \in \CANDL{S}{G}$}{
            $\CAND{S\cup\set{u}}{G} \gets \CAND{S}{G}\setminus (\CANDL{S}{G} \cap N(u))$\;
        } \ElseIf{$u \in \CANDR{S}{G}$}{
            $\CAND{S\cup\set{u}}{G} \gets \CAND{S}{G}\setminus (\CANDR{S}{G} \cap N(u))$\;
        }
        $\CAND{S\cup\set{u}}{G} \gets \CAND{S\cup\set{u}}{G} \cup \Addv{S}{u}{G}$\; 
        \Return{$\CAND{S\cup\set{v}}{G}$}\;
    }
\end{algorithm}

In this paper, 
we propose two enumeration algorithms for Problem~\ref{prob:ind} and Problem~\ref{prob:sub}, and this section describes the algorithm for Problem~\ref{prob:ind}. 
The pseudocode of the algorithm is described in Algorithm~\ref{alg:proposed}.
We employ \name{binary partition method} for constructing the algorithms. 
The algorithm outputs the minimal solution to be output, and partition the set of remaining solutions to be output into two or more disjoint subsets. 
Then, the algorithm recursively solve the problems for each subset, by generating recursive calls.
We call this dividing step excluding recursive calls (Line~\ref{alg:proposed:call:rec} in Algorithm~\ref{alg:proposed}) an \name{iteration}. 

For any pair $X$ and $Y$ of iterations, 
$X$ is the \name{parent} of $Y$ if $Y$ is called from $X$ and 
$Y$ is a \name{child} of $X$ if $X$ is the parent of $Y$.

\begin{figure}[t]
    \centering
    \includegraphics[width=0.6\textwidth]{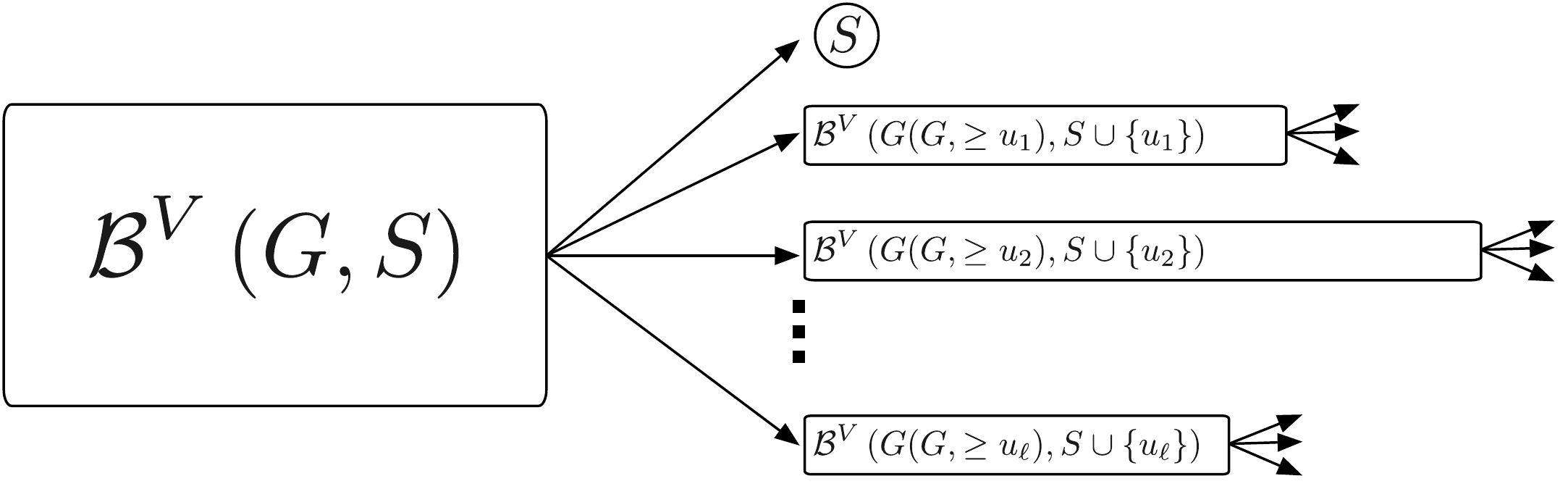}
    \caption{Example of the partitioning the solution space. 
        Algorithm~\ref{alg:proposed} recursively partitions the solution space $\iSOL{G, S}$ 
        into smaller disjoint solution spaces, according to $\CAND{S}{G} = \set{u_1, \dots, u_\ell}$. 
    }
    \label{fig:binary:ind}
\end{figure}

For any bipartite vertex set $S$, 
we say that $S'$ is a \name{child} of $S$ if there exists a vertex $u$ such that  $S' = S\cup\set{u}$. 
A vertex $v \notin S$ is a \name{child generator} of $S$ for $G$ if $S\cup\set{v}$ is a bipartite vertex set in $G$. 
That is,
the proposed algorithm enumerates all bipartite vertex sets by recursively adding a child generator to a current bipartite vertex set $S$. 
We denote by $\CAND{S}{G}$ the set of child generators of $S$ in $G$. 
Suppose that $r$ be the smallest vertex in $S$. 
Let $\Lg{S} = \inset{u\in S}{\dist{u, r} \bmod 2 = 0}$ 
and $\Rg{S} = \inset{u\in S}{\dist{u, r} \bmod 2 = 1}$.  
For any vertex $v$ in $G$, 
any descendant iteration of \Rec{$G, \set{v}, N(v)$} does not output 
a bipartite vertex set including vertices less than $v$. 
Hence, no vertex will never move to the other side in any descendant bipartite vertex set. 
Let $\CANDL{S}{G} = \inset{u\in\CAND{S}{G}}{u \in \Lg{S\cup\set{u}}}$ 
and $\CANDR{S}{G} = \inset{u\in\CAND{S}{G}}{u \in \Rg{S\cup\set{u}}}$.  
Note that $\CAND{S}{G} = \CANDL{S}{G} \sqcup \CANDR{S}{G}$, 
where $A \sqcup B$ is the disjoint union of $A$ and $B$. 
We denote by $\iSOL{G, S} = \inset{S' \in \iSOL{G}}{S \subseteq S'}$ the collection of bipartite vertex sets which include $S$. 
Note that $\iSOL{G} = \iSOL{G, \emptyset}$. 
From now on,
we fix a graph $G$ and a bipartite vertex set $S$ of $G$. 
By the following lemma, 
the algorithm divides $\iSOL{G, S}$ according to $\CAND{S}{G}$ (Fig.~\ref{fig:binary:ind}). 
For an edge $u \in \CAND{S}{G}$, we define $G(S,\ge u)$ by $G \setminus \inset{v \in \CAND{S}{G}}{v < u}$.

\begin{lemma}
    \label{lem:disjoint:solution:space}
$\iSOL{G(S, \ge u), S\cup\set{u}} \cap \iSOL{G(S, \ge v), S\cup\set{v}} = \emptyset$ for any $u \neq v$ of $\CAND{S}{G}$. 
\end{lemma}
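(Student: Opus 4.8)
The plan is to reduce the claim to a single observation about which vertices survive in the two reduced graphs. Without loss of generality I would assume $u < v$; the case $v < u$ is completely symmetric, obtained by swapping the roles of the two child generators.

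First I would unwind the definition of $G(S, \ge v)$. By definition $G(S, \ge v) = G \setminus \inset{w \in \CAND{S}{G}}{w < v}$, and since $u \in \CAND{S}{G}$ with $u < v$, the vertex $u$ lies in the deleted set. Hence $u$ is not a vertex of $G(S, \ge v)$. Because every set $T \in \iSOL{G(S, \ge v), S\cup\set{v}}$ is by definition a connected bipartite vertex set of $G(S, \ge v)$, we have $T \subseteq V(G(S, \ge v))$, and in particular $u \notin T$. On the other side, every set $T' \in \iSOL{G(S, \ge u), S\cup\set{u}}$ satisfies $S \cup \set{u} \subseteq T'$ directly from the definition of $\iSOL{G(S, \ge u), S\cup\set{u}}$, so $u \in T'$. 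Putting the two together, the family on the $u$ side consists only of sets containing $u$, while the family on the $v$ side consists only of sets avoiding $u$; the two families therefore cannot share a common element, which is exactly the asserted disjointness.

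I do not expect a genuine obstacle here: the whole content of the lemma is carried by the algorithm's deletion step, which removes each processed child generator from the graph before moving on to the next one, so that $G(S, \ge v)$ literally cannot contain any smaller generator such as $u$. The only point needing a little care is to state explicitly that membership in $\iSOL{G', \cdot}$ forces a solution to be a subset of $V(G')$, so that deleting $u$ from the graph really does forbid $u$ from occurring in any solution counted on the $v$ side; once this is made precise the argument is immediate.
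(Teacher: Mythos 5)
Your proof is correct and follows essentially the same route as the paper's: assume $u < v$, observe that $u$ is removed from $G(S, \ge v)$ by definition, so any member of $\iSOL{G(S, \ge v), S\cup\set{v}}$ avoids $u$ while any member of $\iSOL{G(S, \ge u), S\cup\set{u}}$ contains it. Your only (harmless) stylistic difference is arguing disjointness directly rather than by contradiction, and you are right that making explicit the containment $T \subseteq V(G(S, \ge v))$ is the one point the paper leaves implicit.
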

\begin{proof}
    Without loss of generality, we can assume that $u < v$. 
    Suppose that $S'$ is a bipartite vertex set in $\iSOL{G(S, \ge u), S\cup\set{u}} \cap \iSOL{G(S, \ge v), S\cup\set{v}}$. 
    This implies that $S'$ includes both $u$ and $v$. 
    However, $G(S, \ge v)$ does not contain $u$. 
    Thus, $S' \notin \iSOL{G(S, \ge v), S\cup\set{v}}$, and this contradicts the assumption. 
    Hence, the statement holds. 
\end{proof}

\begin{lemma}
    \label{lem:divide:solution:sapce}
    $\iSOL{G, S} = \set{S} \cup \bigsqcup_{u \in \CAND{S}{G}}^\ell \iSOL{G(S, \ge u), S\cup\set{u}}$. 
\end{lemma}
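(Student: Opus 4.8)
The plan is to prove the claimed identity by establishing the two set inclusions separately and then reading off disjointness from the earlier lemma. For $\supseteq$: the set $S$ plainly lies in $\iSOL{G, S}$. For a fixed $u \in \CAND{S}{G}$, the graph $G(S, \ge u) = G \setminus \inset{v \in \CAND{S}{G}}{v < u}$ is an induced subgraph of $G$ obtained by vertex deletions, so any set $S'$ that is present in $G(S, \ge u)$ satisfies $G(S, \ge u)[S'] = G[S']$; hence a connected bipartite vertex set of $G(S, \ge u)$ is also one of $G$. Since every member of $\iSOL{G(S, \ge u), S\cup\set{u}}$ contains $S\cup\set{u} \supseteq S$, each such set belongs to $\iSOL{G, S}$, which gives the inclusion.

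For the reverse inclusion I take an arbitrary $S' \in \iSOL{G, S}$. If $S' = S$ it is covered by $\set{S}$, so assume $S' \supsetneq S$. The crucial step is to pick the correct child generator to index the summand containing $S'$. First I claim that $\inset{v \in \CAND{S}{G}}{v \in S'}$ is nonempty. Indeed, since $G[S]$ and $G[S']$ are connected and $S \subsetneq S'$, some $w \in S' \setminus S$ is adjacent to a vertex of $S$; then $S \cup \set{w}$ is connected, and as the vertex set of an induced subgraph of the bipartite graph $G[S']$ it is bipartite, so $w$ is a child generator of $S$ belonging to $S'$. Let $u$ be the smallest element of this nonempty set.

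It then remains to verify $S' \in \iSOL{G(S, \ge u), S\cup\set{u}}$. By the minimality of $u$, no child generator $v < u$ lies in $S'$, so $S'$ contains none of the vertices removed in forming $G(S, \ge u)$; therefore $S'$ is present in $G(S, \ge u)$, where its induced subgraph equals $G[S']$ and is thus connected and bipartite. Together with $S \cup \set{u} \subseteq S'$, this places $S'$ in the summand indexed by $u$, proving $\subseteq$. Disjointness of the whole union is then immediate: pairwise disjointness of the summands is exactly Lemma~\ref{lem:disjoint:solution:space}, while $\set{S}$ is disjoint from every summand because each of their members strictly contains $S$.

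The main obstacle is the nonemptiness claim in the second paragraph: one must guarantee that an arbitrary enlargement $S'$ of $S$ genuinely contains a child generator of $S$, and that selecting the minimal such generator leaves $S'$ untouched by the pruning that defines $G(S, \ge u)$. Both points hinge on the connectivity of $G[S]$ and $G[S']$ together with the fact that bipartiteness is hereditary under taking induced subgraphs.
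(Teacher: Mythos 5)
Your proof is correct and takes essentially the same route as the paper's: for $S' \supsetneq S$, pick the smallest child generator $u \in \CAND{S}{G} \cap S'$, observe that minimality means $S'$ avoids every vertex deleted in forming $G(S, \ge u)$, and conclude $S' \in \iSOL{G(S, \ge u), S\cup\set{u}}$, with pairwise disjointness supplied by Lemma~\ref{lem:disjoint:solution:space}. You merely fill in two details the paper leaves implicit — the connectivity argument showing $\CAND{S}{G} \cap S' \neq \emptyset$, and the heredity of bipartiteness under vertex deletion for the easy inclusion — so the argument matches the paper's.
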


\begin{proof}
    If $S' \in \set{S} \cup \bigsqcup_{u \in \CAND{S}{G}}^\ell \iSOL{G(S, \ge u), S\cup\set{u}}$, 
    then $S'$ is obviously in $\iSOL{G, S}$. 
    We assume that $S' \in \iSOL{G, S}$ and $S' \supsetneq S$. 
    This implies that $S'$ includes one of child generators in $\CAND{S}{G}$. 
    Let $v$ be the smallest child generator in $\CAND{S}{G}$ such that $v$ belongs to $S'$. 
    Hence, $S' \subseteq V(G(S, \ge v))$. 
    Therefore, $S' \in \iSOL{G(S, \ge v), S}$ and the statement holds. 
\end{proof}

Next, 
we consider the correctness of \ComputeChildGenerator. 
For brevity, we introduce some notations: 
Let $u$ be a child generator in $\CAND{S}{G}$. 
$\Addv{S}{u}{G} = \inset{w \in N(u)}{w \notin N[S]}$ is the set of vertices that are adjacent to only $u$ in $S\cup\set{u}$. 
Note that $\CAND{S}{G} \cap \Addv{S}{u}{G} = \emptyset$. 
$\Delta(S, G, u) = \CANDL{S}{G} \cap N[u]$ if $u \in \CANDL{S}{G}$;   
$\Delta(S, G, u) = \CANDR{S}{G} \cap N[u]$ if $u \in \CANDR{S}{G}$.  
Intuitively, 
$\Addv{S}{u}{G}$ and $\Delta(S, G, u)$ are the set of vertices that are added to and removed from $\CAND{S}{G}$ to compute $\CAND{S\cup\set{u}}{G(G, \ge u)}$, 
respectively. 
The following lemma shows the sufficient and necessary conditions for computing $\CAND{S\cup\set{u}}{G(G, \ge u)}$.  

\begin{lemma}
    \label{lem:correctness:child:gen}
     $\CAND{S\cup\set{u}}{G(S, \ge u)} 
     = \tuple{\tuple{\CAND{S}{G}\setminus \Delta(S, G, u)} \sqcup \Addv{S}{u}{G}} \setminus \inset{v\in \CAND{S}{G}}{v < u}.$
\end{lemma}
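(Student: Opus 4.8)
The plan is to characterize directly, from first principles, which vertices are child generators of $S\cup\set{u}$, and then read off the claimed formula. By the symmetry between the two sides I would assume without loss of generality that $u \in \CANDL{S}{G}$; the case $u \in \CANDR{S}{G}$ is identical after swapping the roles of $\Lg{\cdot}$ and $\Rg{\cdot}$. The starting observation is that a connected bipartite graph has a unique $2$-coloring, so the bipartition of $S\cup\set{u}$ restricted to $S$ coincides with that of $S$. Since $u \in \CANDL{S}{G}$ means exactly that the neighbors of $u$ inside $S$ all lie in $\Rg{S}$, the vertex $u$ is placed on the left, and the two sides of $S\cup\set{u}$ are $\Lg{S}\cup\set{u}$ and $\Rg{S}$. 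This reduces the problem to the clean criterion: a vertex $w\notin S\cup\set{u}$ is a child generator of $S\cup\set{u}$ iff $w$ is adjacent to $S\cup\set{u}$ and all its neighbors there lie on a single side.

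Next I would run the case analysis on $w$ against this criterion, first working in $G$ (ignoring the removal of smaller candidates). If $w\in\CANDR{S}{G}$, its $S$-neighbors lie in $\Lg{S}$, and since $u$ is also on the left, adding $u$ leaves all of $w$'s set-neighbors on the left regardless of whether $w\in N(u)$; hence every vertex of $\CANDR{S}{G}$ survives. If $w\in\CANDL{S}{G}$, its $S$-neighbors lie in $\Rg{S}$; when $w\in N(u)$ the new neighbor $u$ lies on the opposite (left) side, producing a forbidden mixed set-neighborhood, whereas for $w\notin N(u)$ nothing changes — so $w$ survives iff $w\notin N(u)$. If $w\in\Addv{S}{u}{G}$, then by definition $w\notin N[S]$, so its only neighbor in $S\cup\set{u}$ is $u$; such a pendant attachment creates no odd cycle and $w$ becomes a new child generator. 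All remaining vertices (those with mixed $S$-neighbors, those not adjacent to $S\cup\set{u}$, and $u$ itself) are excluded. Collecting these, the child generators of $S\cup\set{u}$ in $G$ are exactly $\tuple{\CAND{S}{G}\setminus\tuple{\CANDL{S}{G}\cap N[u]}}\sqcup\Addv{S}{u}{G}$; the closed neighborhood $N[u]$ is what also discards $u$, and the union is disjoint because $\CAND{S}{G}\cap\Addv{S}{u}{G}=\emptyset$. This is precisely $\tuple{\CAND{S}{G}\setminus\Delta(S,G,u)}\sqcup\Addv{S}{u}{G}$.

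Finally I would pass from $G$ to $G(S,\ge u)$, which deletes $\inset{v\in\CAND{S}{G}}{v<u}$. The key monotonicity point is that deleting vertices can neither add adjacencies nor destroy an odd cycle inside the induced subgraph on $S\cup\set{u}\cup\set{w}$, and the deleted vertices are disjoint from $S\cup\set{u}$; hence no new child generator is created, and the status of any $w$ that is not itself deleted is unaffected. Thus the only effect of the restriction is to remove those child generators lying below $u$, and since $\Addv{S}{u}{G}$ is disjoint from $\CAND{S}{G}$ this removal is captured exactly by $\setminus\inset{v\in\CAND{S}{G}}{v<u}$, giving the stated identity. I expect the main obstacle to be the careful side-bookkeeping in the case analysis — in particular verifying that $u$'s placement does not disturb the sides of $S$ and that adjacency to $u$ flips a $\CANDL{}$ vertex but not a $\CANDR{}$ vertex — together with stating the vertex-deletion monotonicity argument precisely enough that the transition to $G(S,\ge u)$ is rigorous.
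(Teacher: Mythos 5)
Your proof is correct and takes essentially the same route as the paper's: the same reduction (WLOG $u \in \CANDL{S}{G}$), the same classification of vertices into $\CANDR{S}{G}$, $\CANDL{S}{G}\cap N[u]$, and $\Addv{S}{u}{G}$ via the one-sided-neighborhood/odd-cycle criterion, and the same observation that passing to $G(S,\ge u)$ merely deletes the candidates of $S$ below $u$ without affecting the status of surviving vertices. The only differences are presentational: you organize the argument as a single iff-characterization plus an explicit monotonicity step (and helpfully note that the closed neighborhood $N[u]$ is what discards $u$ itself), whereas the paper proves the two inclusions separately.
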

\begin{proof}
    We let 
    $C_* = \tuple{\tuple{\CAND{S}{G}\setminus \Delta(S, G, u)} \sqcup \Addv{S}{u}{G}} \setminus \inset{v\in \CAND{S}{G}}{v < u}$. 
    Suppose that $x \in \CAND{S\cup\set{u}}{G(S, \ge u)}$. 
    Without loss of generality, 
    we can assume that $u \in \CANDL{S}{G}$. 
    From the definition of $G(S, \ge u)$, $x \notin \inset{v\in \CAND{S}{G}}{v < u}$. 
    If $x \notin N[S]$, then  
    since $x$ can be added to $S\cup\set{u}$, 
    $x$ is adjacent to only one vertex $u$ in $S\cup\set{u}$. 
    Hence, $x \in \Addv{S}{u}{G}$.
    If $x \in N[S]$, then
    since $x \in \CAND{S\cup\set{u}}{G(S, \ge u)}$, 
    $x \in \CAND{S}{G}$. 
    Moreover, 
    if $x$ is in $\CANDL{S}{G} \cap N(u)$, 
    then $S\cup\set{u, x}$ has an odd cycle. 
    Hence, the statement holds. 

    Suppose that 
    $x \in C_*$. 
    Without loss of generality, 
    we can assume that $u \in \CANDL{S}{G}$. 
    Since $x \in \CAND{S}{G} \sqcup \Addv{S}{u}{G}$,
    $S' = S\cup\set{u, x}$ is connected. 
    Suppose that $S'$ has an odd cycle $C_o$. 
    Since $S\cup\set{u}$ is bipartite, $C_o$ must contain $x$. 
    This implies that $x$ has neighbors both in $\Lg{S'}$ and $\Rg{S'}$. 
    If $x \in \Addv{S}{u}{G}$, 
    then  $x$ has exactly one neighbor in $S'$ since $u \in \Lg{S\cup\set{u}}$. 
    Hence, $x \in \CAND{S}{G}$. 
    This implies that either (I) $N(x) \cap (S \cup\set{x}) \subseteq \Lg{S}$ or (II) $N(x) \cap (S \cup\set{x}) \subseteq \Rg{S}$. 
    If (I) holds, then 
    $x$ has neighbors only in $\Lg{S\cup\set{u}}$ on $S'$ since $u \in \CANDL{S}{G}$ and $x \in \CANDR{S}{G}$. 
    If (II) holds, then 
    then $x$ has neighbors only in $\Rg{S \cup\set{u}}$ on $S'$ since $x \notin N(u)$. 
    Both cases contradict that $x$ in $C_o$. 
    Hence, $x \in \CAND{S\cup\set{u}}{G(S, \ge u)}$ and the statement holds. 
\end{proof}

From the above discussion, 
we can show the correctness of our algorithm. 

\begin{lemma}
    \label{lem:correctness:algorithm}
    Algorithm~\ref{alg:proposed} correctly enumerates all bipartite vertex sets in $G$. 
\end{lemma}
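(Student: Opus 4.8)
The plan is to prove, by induction on the recursion tree, the following invariant: for every graph $H$ and every connected bipartite vertex set $S$ of $H$, the call $\Rec{H, S, \CAND{S}{H}}$ outputs each member of $\iSOL{H, S}$ exactly once and outputs nothing else. Here I write $H(S, \ge u)$ for $H \setminus \inset{v \in \CAND{S}{H}}{v < u}$, in analogy with the definition given just before Lemma~\ref{lem:disjoint:solution:space}. Granting this invariant, the correctness of \Main{} is immediate. When \Main{} reaches vertex $v$ it has already deleted $1, \dots, v-1$, so it issues $\Rec{H_v, \set{v}, N(v)}$ with $H_v = G \setminus \set{1, \dots, v-1}$; since $\set{v, w}$ is connected and bipartite exactly when $w \in N(v)$, we have $\CAND{\set{v}}{H_v} = N(v)$ and the call meets the invariant's precondition. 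Thus it outputs precisely the connected bipartite vertex sets of $G$ whose minimum vertex is $v$, and since every such set has a unique minimum, \Main{} outputs each element of $\iSOL{G}$ exactly once.

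For the inductive step I would examine the \texttt{while} loop of $\Rec{H, S, \CAND{S}{H}}$, assuming as hypothesis that every deeper call satisfying the precondition is correct. The output statement at the top emits $S$, realising the term $\set{S}$ of Lemma~\ref{lem:divide:solution:sapce}. Writing $u_1 < u_2 < \dots < u_\ell$ for the elements of $\CAND{S}{H}$, the loop processes them in this order, and the crux is the invariant that, at the moment the recursive call for $u_i$ is issued, the current graph is $H(S, \ge u_i)$ and the candidate set supplied to that call is $\CAND{S \cup \set{u_i}}{H(S, \ge u_i)}$. The graph condition holds because by then the loop has deleted exactly $u_1, \dots, u_{i-1}$, which is precisely $\inset{v \in \CAND{S}{H}}{v < u_i}$; the candidate condition is exactly Lemma~\ref{lem:correctness:child:gen}, as \ComputeChildGenerator{} removes $\Delta(S, H, u_i)$ from, and adjoins $\Addv{S}{u_i}{H}$ to, the candidate variable --- which, having already had $u_1, \dots, u_i$ stripped out, accounts both for the restriction $\setminus \inset{v \in \CAND{S}{H}}{v < u_i}$ and for the deletion of $u_i$ itself. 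The induction hypothesis then gives that the $i$-th recursive call outputs $\iSOL{H(S, \ge u_i), S \cup \set{u_i}}$, each set once.

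It remains to assemble the pieces. The sets emitted by $\Rec{H, S, \CAND{S}{H}}$ are exactly $\set{S}$ together with $\bigsqcup_{i=1}^{\ell} \iSOL{H(S, \ge u_i), S \cup \set{u_i}}$, which by Lemma~\ref{lem:divide:solution:sapce} equals $\iSOL{H, S}$; this gives completeness and soundness at once. For non-duplication, Lemma~\ref{lem:disjoint:solution:space} makes the subspaces $\iSOL{H(S, \ge u_i), S \cup \set{u_i}}$ pairwise disjoint, each consists of proper supersets of $S$ and so is disjoint from $\set{S}$, and within a single branch the induction hypothesis already guarantees uniqueness. Hence no set is output twice, which closes the induction and proves the lemma.

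The step I expect to be the main obstacle is justifying the graph-state half of this invariant, not the set-theoretic bookkeeping, because Algorithm~\ref{alg:proposed} mutates $G$ in place. One must verify that, upon return from the $i$-th recursive call, the net deletions relative to the start of the loop are exactly $\set{u_1, \dots, u_i}$ --- equivalently, that the deletions performed \emph{inside} a recursive subtree are reverted before its parent continues, so that the graph seen at the $u_{i+1}$-iteration is genuinely $H(S, \ge u_{i+1})$ rather than an over-pruned graph. This is delicate precisely because a vertex of $\Addv{S}{u_i}{H}$, being adjacent to $u_i$ but not to $S$, can legitimately recur in a later subspace $\iSOL{H(S, \ge u_j), S \cup \set{u_j}}$ with $j > i$, and would be lost if an earlier subtree's deletion of it persisted. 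Nailing down this reasoning is what bridges the imperative description of the algorithm and the clean recursive decomposition supplied by the three preceding lemmas.
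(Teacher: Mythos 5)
Your proof is correct and takes essentially the same route as the paper: the paper's own proof is just a two-sentence invocation of Lemmas~\ref{lem:divide:solution:sapce} and~\ref{lem:correctness:child:gen}, which you expand into a proper induction on the recursion tree, including the handling of the outer loop of \texttt{Main} (partitioning solutions by their minimum vertex) and the bookkeeping that the prior removal of $u_1,\dots,u_i$ from the candidate variable realizes both the $\setminus\inset{v\in\CAND{S}{G}}{v<u}$ term and the $N[u]$-versus-$N(u)$ discrepancy in Lemma~\ref{lem:correctness:child:gen}. The graph-state obstacle you flag at the end is a genuine subtlety but is resolved by the paper outside this lemma, in the data-structure discussion of Section~3.1: the algorithm records a modification history and, when backtracking from a child iteration, completely restores the data structure (including the vertices deleted inside that subtree), so the net deletions at the $i$-th loop iteration are exactly $\set{u_1,\dots,u_i}$ as your invariant requires.
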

\begin{proof}
    From Line~\ref{alg:proposed:main:loop:start} to \ref{alg:proposed:main:loop:end}, 
    the algorithm divides the solution space according to $\CAND{S}{G}$ as shown in Lemma~\ref{lem:divide:solution:sapce}. 
    In addition, 
    from Lemma~\ref{lem:correctness:child:gen}, 
    \ComputeChildGenerator correctly computes the sets of child generators of $S'$ 
    where $S'$ is a child of $S$. 
    Hence, the statement holds. 
\end{proof}

\subsection{Update of child generators}

\begin{figure}[t]
    \centering
    \includegraphics[width=0.9\textwidth]{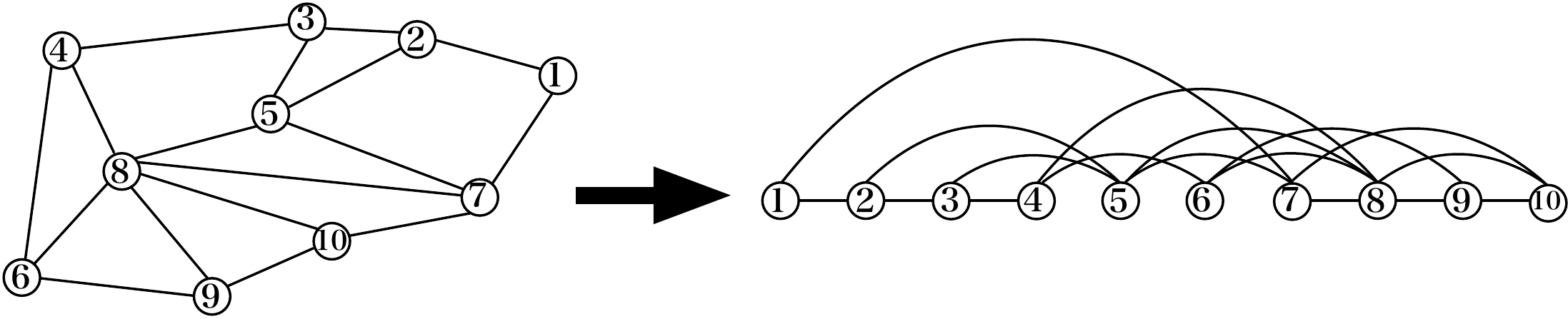}
    \caption{An example of a degeneracy ordering of $G$. 
    The degeneracy of $G$ is two even though there is a vertex with degree six. 
    The right-hand side shows a degenerate ordering of $G$. 
    In the figure, if a vertex $u$ is larger than a vertex $v$, 
    then $u$ is placed at the right $v$. 
    Each vertex has at most two larger neighbors. }
    \label{fig:degenerate}
\end{figure}

In this section, 
we consider the time complexity for the maintenance of the sets of child generators. 
If we na\"{i}vely use Lemma~\ref{lem:correctness:child:gen} for \ComputeChildGenerator, 
we can not achieve $\Order{k}$ amortized time per solution. 
To overcome this, 
we use a \name{degeneracy} ordering on vertices. 
$G$ is a \name{$k$-degenerate} graph~\cite{Lick:White:1970} if 
for any induced graph $S$ of $G$, 
$S$ has a vertex whose degree is at most $k$ (Fig.~\ref{fig:degenerate}). 
The \name{degeneracy} of $G$ is the smallest $k$ such that $G$ is $k$-degenerate.  
Every $k$-degenerate graph $G$ has a \name{degeneracy ordering} on $V$. 
The definition of a degeneracy ordering is that 
for any vertex $v$ in $G$, the number of neighbors of $v$ that are larger than $v$ is at most $k$. 
By recursively removing a vertex with the minimum degree, 
we can obtain this ordering in linear time~\cite{Matula:Beck:1983}. 
Note that there are many degeneracy orderings for a graph. 
In what follows, 
we pick one of degeneracy orderings of $G$ and then 
fix it as the vertex ordering of $G$. 
For any two vertices $u, v$ in $G$, 
we write $u < v$ if $u$ is smaller than $v$ in the ordering. 
We can easily see that 
if $u$ is the smallest child generator, 
then $u$ has at most $k$ neighbors in $G[\CAND{S}{G}]$ 
since $G[\CAND{S}{G}]$ is $k$-degenerate. 
Therefore, 
Lemma~\ref{lem:correctness:child:gen} implies that 
we can compute the child generators of $S\cup\set{u}$ 
by removing at most $k$ vertices 
and adding some vertices that generate some grandchildren of $S$.

\begin{figure}[t]
    \centering
    \includegraphics[width=0.8\textwidth]{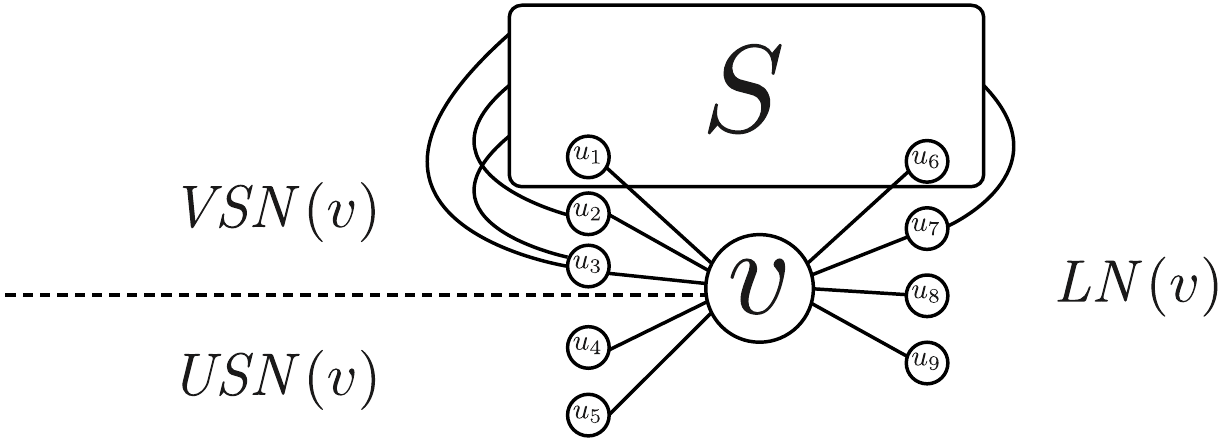}
    \caption{
        Three types of neighbors of $v$. 
        Here, $\VSN{v, S} = \set{u_1, u_2, u_3}$, 
        $\USN{v, S} = \set{u_4, u_5}$, and $\LN{v, S} = \set{u_6, \dots, u_9}$. 
        All of vertices in $\VSN{v, S}$ are adjacent to some vertices in $S$. 
        Thus, for any $S \subseteq S'$, 
        if $u_i$ is in $\VSN{v, S}$, then $u_i$ is also in $\VSN{v, S'}$. 
        In addition, 
        Algorithm~\ref{alg:proposed} also stores $\VSNc{v, S}$ and $\VSNnc{v, S}$. 
    }
    \label{fig:neighbors}
\end{figure}

Next, we define the three types of neighbors of a vertex $v$ in $S$, 
\name{larger neighbors}, 
\name{visited smaller neighbors},
and \name{unvisited smaller neighbors}: 
For any vertex $u \in N(v)$, 
(1) $u$ is a larger neighbor of $v$ if $v < u$, 
(2) $u$ is a visited smaller neighbor of $v$ if $u \in N[S]$ and $u < v$, and
(3) $u$ is an unvisited smaller neighbor otherwise (Fig.~\ref{fig:neighbors}). 
Intuitively, 
$u$ is a visited smaller neighbor if one of its neighbor is already picked in some ancestor iteration of $X$ which receives $S$. 
We denote by $\LN{v, S}$, $\VSN{v, S}$, and $\USN{v, S}$ the sets of 
larger neighbors, 
visited smaller neighbors,
and unvisited smaller neighbors of $v$, respectively. 
In addition, 
the algorithm divides $\VSN{v, S}$ into two disjoint parts $\VSNc{v, S}$ and $\VSNnc{v, S}$; 
$\VSNc{v, S} \subseteq \CAND{S}{G}$ and $\VSNnc{v, S} \cap \CAND{S}{G} = \emptyset$. 
We omit $S$ if no confusion arises.  

We now consider the data structure for the algorithm.
For each vertex $v$, 
the algorithm stores $\LN{v}$, $\VSNc{v}$, $\VSNnc{v}$, and $\USN{v}$ in doubly linked lists. 
$\CAND{S}{G}$ is also stored in a doubly linked list and sorted by the degeneracy ordering. 
The algorithm needs $\Order{m} = \Order{kn}$ space for storing these data structures. 
The algorithm also records the modification when an iteration $X$ calls a child iteration $Y$. 
Let $S_X$ (resp. $S_Y$) be bipartite vertex sets received by $X$ (resp. $Y$). 
Note that for each neighbor $w$ of a vertex $v$, 
if $w$ moves from $\USN{v, S_X}$ to $\VSN{v, S_Y}$, 
then $w$ will never moves from the list in any descendant of $Y$. 
Moreover, 
when $w$ moves to $\VSN{v, S_Y}$, $w$ becomes a child generator of $Y$, 
and thus,  $w \in \VSNc{v, S_Y}$. 
Initially, 
for all smaller neighbors of $v$ is in $\USN{v, \emptyset}$. 
In addition, 
$v$ will be never added to the set in any descendant of $S$
if $v$ is not a child generator of $S$.  
Hence, the algorithm totally needs $\Order{m}$ space for storing the modification history. 
When the algorithm backtracks to $X$ from $Y$, 
the algorithm can completely restore the data structure in the same complexity as the transition from $X$ to $Y$. 
Now, 
we consider the time complexity for the transition from $X$ to $Y$. 
Suppose that when we remove a vertex $v$ from $\CAND{S}{G}$ or add $v$ to $S$, 
for each larger neighbor $w$ of $v$, 
we give a flag which represents $w$ is not a child generator of the child of $S$. 
This can be done in $\Order{k}$ time per vertex because of the degeneracy. 
The next technical lemma shows the number of the larger neighbors which are checked for updating the set of child generators. 

\begin{lemma}
    \label{lem:at:most:one:larger:in:S}
    $S$ has at most one larger neighbor of $v$ 
    for any vertex $v$ in $\CAND{S}{G}$.  
\end{lemma}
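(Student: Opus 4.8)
The plan is to argue by contradiction, exploiting the single structural invariant that drives the whole algorithm: every iteration extends its current set by the \emph{smallest} remaining child generator, and by Lemma~\ref{lem:correctness:child:gen} the addition of $u$ permanently discards every child generator of the parent that is smaller than $u$. I will show that two larger neighbors of $v$ inside $S$ would force $v$ to be a child generator at an ancestor iteration that then adds a strictly larger vertex, which is impossible.

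First I would fix $v \in \CAND{S}{G}$ and suppose, for contradiction, that there are two distinct $u_1, u_2 \in S \cap N(v)$ with $v < u_1$ and $v < u_2$. Since $S$ is produced along a chain of iterations $\set{r} = S_0 \subset S_1 \subset \dots \subset S_t = S$, where $r$ is the smallest vertex of $S$ and each $S_{i+1}$ adds a single child generator, both $u_1$ and $u_2$ are added somewhere along this chain. Without loss of generality assume $u_1$ is added before $u_2$, and let $S'$ denote the set held by the iteration that adds $u_2$, so that $u_1 \in S'$, $u_2 \notin S'$, and $S' \cup \set{u_2} \subseteq S$.

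The heart of the argument is to show that $v$ is already a child generator of $S'$. Because $v \in \CAND{S}{G}$, the vertex $v$ is present in the working graph at the iteration holding $S$; since the working graph only shrinks as the recursion descends, $v$ is also present at the shallower iteration holding $S'$, and $v \notin S$. As $u_1 \in S' \cap N(v)$ and $S'$ is a (connected) bipartite vertex set, the set $S' \cup \set{v}$ is connected; and since $S' \cup \set{v} \subseteq S \cup \set{v}$ with $S \cup \set{v}$ bipartite, $S' \cup \set{v}$ is bipartite as well. Hence $v \in \CAND{S'}{\cdot}$. But $u_2$ is the smallest remaining child generator selected at that iteration while $v < u_2$, so by Lemma~\ref{lem:correctness:child:gen} the addition of $u_2$ removes $v$ from the candidate set; moreover $v$ can never re-enter in a descendant, because a new candidate can only appear through $\Addv{\cdot}{\cdot}{\cdot}$, which adds vertices outside $N[\cdot]$, whereas $v \in N[S']$ stays adjacent to every descendant set. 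Therefore $v \notin \CAND{S}{G}$, contradicting the choice of $v$.

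I expect the main obstacle to be the middle step: rigorously certifying that $v$ is a genuine child generator of $S'$ \emph{in the graph actually used by that iteration}. This needs two facts tracked carefully, namely that $v$ is never deleted from the working graph before $S$ is reached (which I extract from $v \in \CAND{S}{G}$ together with the monotone shrinking of the graph), and that bipartiteness is inherited by the subset $S' \cup \set{v}$. Once these are in place, the contradiction is immediate from the smallest-child-generator rule and Lemma~\ref{lem:correctness:child:gen}; the permanence of non-candidacy, which closes off any re-entry of $v$, follows directly from the form of $\Addv{\cdot}{\cdot}{\cdot}$.
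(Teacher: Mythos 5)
Your proof is correct and follows essentially the same route as the paper's: both identify the ancestor set $S'$ held just before the later larger neighbor ($u_2$, the paper's $y$) is added, observe that $v$ is then a child generator of $S'$ with $v < u_2$, and invoke Lemma~\ref{lem:correctness:child:gen} (the removal of all candidates below the chosen generator in $G(S', \ge u_2)$) to contradict $v \in \CAND{S}{G}$. You in fact supply details the paper only asserts ``without loss of generality''---namely the connectivity/bipartiteness inheritance certifying $v \in \CAND{S'}{\cdot}$ and the permanence of $v$'s exclusion via the form of $\Addv{S}{u}{G}$---so your write-up is a more complete version of the same argument.
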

\begin{proof}
    Suppose that two or more neighbors of $v$ are in $S$. 
    Let $x$ and $y$ be two of them such that $y$ is added after $x$, 
    and $S' \subseteq S$ be an ancestor bipartite vertex set of $S$ for some graph $G'$ such that $x \in S'$ and $y \notin S'$. 
    Without loss of generality, 
    we can assume that $v$ and $y$ are child generators of $S'$. 
    We can also assume that $v$ is added after $y$. 
    Then, 
    from Lemma~\ref{lem:correctness:child:gen}, 
    When $y$ is added to $S'$, $v$ is not in $G(G', \ge y)$ since $v < y$. 
    This contradicts, and thus, the statement holds.
\end{proof}

\begin{lemma}
    \label{lem:compute:diff:ind}
    Let $u$ and $v$ be two vertices in $\CAND{S}{G}$ such that $u < v$ and $\nexists w \in \CAND{S}{G} (u < w < v)$.  
    $\CAND{S\cup\set{v}}{G(S, \ge v)}$ can be computed from $\CAND{S\cup\set{u}}{G(S, \ge u)}$ in 
    $\Order{k\size{\CAND{S\cup\set{u}}{G(S, \ge u)}} + k\size{\CAND{S\cup\set{v}}{G(S, \ge v)}}}$ time. 
\end{lemma}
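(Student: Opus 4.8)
The plan is to route both candidate sets through their common parent $\CAND{S}{G}$, read off their symmetric difference from Lemma~\ref{lem:correctness:child:gen}, and then bound separately the number of vertices that change status and the cost of effecting each change. Write $C_u$ and $C_v$ for $\CAND{S\cup\set{u}}{G(S, \ge u)}$ and $\CAND{S\cup\set{v}}{G(S, \ge v)}$. Since $v$ is the immediate successor of $u$ in $\CAND{S}{G}$, I would first note that $G(S, \ge v) = G(S, \ge u) \setminus \set{u}$ and that $u$ is the unique element of $\CAND{S}{G}$ lying below $v$. Applying Lemma~\ref{lem:correctness:child:gen} to $u$ and to $v$ then expresses $C_u$ and $C_v$ as $\CAND{S}{G}$ perturbed by $\tuple{\Addv{S}{u}{G}, \Delta(S, G, u)}$ and $\tuple{\Addv{S}{v}{G}, \Delta(S, G, v)}$ respectively, so the symmetric difference $C_u \triangle C_v$ is contained in $\Addv{S}{u}{G} \cup \Addv{S}{v}{G} \cup \Delta(S, G, u) \cup \Delta(S, G, v) \cup \set{u}$.

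Next I would bound each part of this union. Because the restriction to $G(S, \ge u)$ (resp. $G(S, \ge v)$) has already discarded every candidate below $u$ (resp. $v$), the only candidate neighbors that $\Delta(S, G, u)$ (resp. $\Delta(S, G, v)$) can still delete are \emph{larger} neighbors of $u$ (resp. $v$), of which there are at most $k$ by the degeneracy; hence these deletions touch only $\Order{k}$ vertices. The sets $\Addv{S}{u}{G}$ and $\Addv{S}{v}{G}$ are by definition subsets of $C_u$ and $C_v$ respectively. Moreover, since $u$ is the unique candidate lying between $u$ and $v$, once $u$ is removed from the graph its previously blocked larger neighbors become candidates again, so every vertex deleted when $u$ was added — except the $\Order{k}$ of them also adjacent to $v$ — reappears inside $C_v$. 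Together these give that only $\Order{\size{C_u} + \size{C_v} + k}$ vertices change candidate status across the transition.

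Finally I would tie this to the data structure: Algorithm~\ref{alg:proposed} realizes the transition by backtracking the modifications recorded for the $u$-step, deleting $u$, and calling \ComputeChildGenerator for $v$. Each vertex inserted into or removed from the candidate list forces an update of the flags of its at most $k$ larger neighbors, which costs $\Order{k}$ by the degeneracy; Lemma~\ref{lem:at:most:one:larger:in:S} ensures that at most one such larger neighbor already lies in $S$, so the left/right bookkeeping stays $\Order{1}$ per neighbor and does not inflate this cost. Multiplying the $\Order{\size{C_u} + \size{C_v}}$ genuinely moved vertices by the $\Order{k}$ per-vertex cost yields $\Order{k\size{C_u} + k\size{C_v}}$, with the one-off $\Order{k}$ scans of the neighborhoods of $u$ and $v$ absorbed by the amortized accounting. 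I expect the main obstacle to be exactly this charging step: proving that the count of moved vertices is controlled by $\size{C_u} + \size{C_v}$ rather than by the possibly much larger parent set $\size{\CAND{S}{G}}$. This rests on the restriction to $G(S, \ge u)$ forcing every deleted neighborhood to be a larger-neighborhood of size at most $k$, and on the monotone $\USN{v} \to \VSN{v}$ transitions being paid once globally rather than at every sibling transition.
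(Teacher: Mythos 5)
Your proposal is correct and follows essentially the same route as the paper's proof: both derive the set of status-changing vertices from Lemma~\ref{lem:correctness:child:gen}, confine the effective deletions to the at most $k$ larger neighbors of $u$ and $v$ (candidates below $u$ being already absent from $G(S,\ge u)$), use Lemma~\ref{lem:at:most:one:larger:in:S} for the $\Order{1}$ partition-side test, and charge each $\Order{k}$ flag update to a vertex of $\CAND{S\cup\set{u}}{G(S, \ge u)}$ or $\CAND{S\cup\set{v}}{G(S, \ge v)}$. The only presentational difference is that you phrase the transition as an undo--redo through the common parent $\CAND{S}{G}$, whereas the paper performs one merged differential pass over $\LN{v}$ and $\LN{u}$ whose case (L.1) makes explicit that vertices deleted by both $u$ and $v$ are left untouched rather than restored and re-deleted (and whose $\VSNc{v,S}$/$\VSNnc{v,S}$ lists are skipped wholesale) --- exactly the skipping your final paragraph postulates.
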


\begin{proof}
    From Lemma~\ref{lem:correctness:child:gen}, 
    only the neighbors of $v$ or $u$ may be added to or removed from $\CAND{S\cup\set{u}}{G(S, \ge u)}$ 
    to obtain $\CAND{S\cup\set{v}}{G(S, \ge v)}$. 
    Let $w$ be a vertex in $\LN{v}$. 
    We consider the following cases: 
    (L.1) $w \in \CAND{S\cup\set{u}}{G(S, \ge u)} \cap \CAND{S\cup\set{v}}{G(S, \ge v)}$ or   
     $w \notin \CAND{S\cup\set{u}}{G(S, \ge u)} \cup \CAND{S\cup\set{v}}{G(S, \ge v)}$.  
    In this case, there is nothing to do. 
    (L.2) $w \in \CAND{S\cup\set{u}}{G(S, \ge u)} \setminus \CAND{S\cup\set{v}}{G(S, \ge v)}$. 
    For each larger neighbor $x$ of $w$, 
    we need to move $w$ from
    $\VSNc{x, S\cup\set{u}}$ to $\VSNnc{x, S\cup\set{v}}$. 
    The number such $x$ is at most $k\size{\CAND{S\cup\set{u}}{G(S, \ge u)}}$. 
    (L.3) $w \in \CAND{S\cup\set{v}}{G(S, \ge v)} \setminus \CAND{S\cup\set{u}}{G(S, \ge u)}$. 
    For each larger neighbor $x$ of $w$, 
    we need to move $w$ from $\VSNnc{x, S\cup\set{u}}$ to $\VSNc{x, S\cup\set{v}}$. 
    The number of such $x$ is at most $k\size{\CAND{S\cup\set{v}}{G(S, \ge v)}}$. 
    Note that for each vertex, at most one larger its neighbor is in $S$ from Lemma~\ref{lem:at:most:one:larger:in:S}. 
    Thus, 
    the above three conditions can be checked in constant time for each $w$
    by checking whether or not $w$ is in the same partition as $v$. 
    Therefore, 
    the larger part can be done in $\Order{k + k\size{\CAND{S\cup\set{u}}{G(S, \ge u)}} + k\size{\CAND{S\cup\set{v}}{G(S, \ge v)}}}$ time. 

    Next, let $w$ be a vertex in $\VSNc{v, S}$. 
    From Lemma~\ref{lem:correctness:child:gen}, 
    such $w$ does not belongs to $\CAND{S\cup\set{v}}{G(S, \ge v)}$. 
    Moreover, 
    since $u$ and $v$ are consecutive on $\CAND{S}{G}$, 
    such $w$ is also not in $\CAND{S\cup\set{u}}{G(S, \ge u)}$. 
    Thus, this case can be done in constant time by skipping such vertices. 
    For each vertex $w$ in $\VSNnc{v, S}$,  
    $w$ can not be added to both $S\cup\set{u}$ and $S\cup\set{v}$. 
    Hence, we skip them. 
    In addition,  we need to remove $v$ from $G(S, \ge u)$. 
    This takes $\Order{k}$ time since we only need to update larger neighbors of $v$. 
    The same procedure needs for updating the neighbors of $u$. 
    Hence, the statement holds. 
\end{proof}

Roughly speaking, 
by ignoring neighbors of $u$ or $v$ such that they can not be added to both $S\cup\set{u}$ and $S\cup\set{v}$, 
we can compute $\CAND{S\cup\set{v}}{G(S, \ge v)}$ from $\CAND{S\cup\set{u}}{G(S, \ge u)}$, efficiently. 
In addition, 
other neighbors have corresponding bipartite vertex sets with size $\size{S} + 2$, 
that is, grandchildren of $S$. 
This implies that we can amortize the cost for these neighbors as follows.

\begin{lemma}
    \label{lem:update:child:gen}
    Let $u$ be a vertex in $\CAND{S}{G}$ and 
    $T(S, u)$ be the computation time for $\CAND{S \cup\set{u}}{G(S, \ge u)}$.  
    The total computation time for all the sets of child generators of $S$'s children and recording the modification history is  
    $\sum_{u \in \CAND{S}{G}} T(S, u) = \Order{k\size{\CAND{S}{G}} + \sum_{u \in \CAND{S}{G}}k\size{\CAND{S\cup\set{u}}{G(S, \ge u)}}}$ 
    time. 
\end{lemma}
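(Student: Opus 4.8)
The plan is to avoid recomputing each child's candidate set from scratch and instead compute them \emph{incrementally}, walking along the child generators of $S$ in increasing order and charging each step to Lemma~\ref{lem:compute:diff:ind}. Write $\CAND{S}{G} = \set{u_1, \dots, u_\ell}$ with $u_1 < \dots < u_\ell$ and abbreviate $C_i = \CAND{S\cup\set{u_i}}{G(S, \ge u_i)}$. First I would compute $C_1$ directly from $\CAND{S}{G}$; then, for each $i \ge 2$, I would compute $C_i$ from $C_{i-1}$. Since the generators are processed in sorted order, $u_{i-1}$ and $u_i$ are consecutive in $\CAND{S}{G}$ (no generator lies strictly between them), so the hypothesis of Lemma~\ref{lem:compute:diff:ind} is met and the $i$-th step, including the logging of the changes needed to restore the data structure on backtracking, costs $\Order{k\size{C_{i-1}} + k\size{C_i}}$.

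For the base step I would argue $T(S, u_1) = \Order{k\size{\CAND{S}{G}} + k\size{C_1}}$ separately. Because $u_1$ is the smallest vertex of $\CAND{S}{G}$, every neighbor of $u_1$ that lies in $\CAND{S}{G}$ is a larger neighbor of $u_1$, so by the degeneracy ordering there are at most $\min\set{k, \size{\CAND{S}{G}}}$ of them. By Lemma~\ref{lem:correctness:child:gen}, forming $C_1$ reduces to deleting $\Delta(S, G, u_1)$, which is contained in those neighbors, and inserting $\Addv{S}{u_1}{G} \subseteq C_1$; each deletion or insertion touches only the $\Order{k}$ larger neighbors of the affected vertex (updating its $\VSNc{v}$/$\VSNnc{v}$ membership and the history). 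Multiplying gives $\Order{k \cdot \min\set{k,\size{\CAND{S}{G}}}} + \Order{k\size{C_1}} = \Order{k\size{\CAND{S}{G}} + k\size{C_1}}$, so the base step has exactly the shape of a Lemma~\ref{lem:compute:diff:ind} transition.

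It then remains to sum: $\sum_{u \in \CAND{S}{G}} T(S, u) = T(S, u_1) + \sum_{i=2}^{\ell}\Order{k\size{C_{i-1}} + k\size{C_i}}$. This is a telescoping sum in which each $\size{C_i}$ appears at most twice and the single extra $\size{\CAND{S}{G}}$ comes from the base step, so the whole expression collapses to $\Order{k\size{\CAND{S}{G}} + \sum_{i=1}^{\ell} k\size{C_i}}$, which is precisely the claimed $\Order{k\size{\CAND{S}{G}} + \sum_{u \in \CAND{S}{G}}k\size{\CAND{S\cup\set{u}}{G(S, \ge u)}}}$.

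I expect the main obstacle to be not the arithmetic but the justification that $C_i$ can genuinely be obtained from $C_{i-1}$ at the charged cost. This requires the invariant that, when the $i$-th step begins, the data structure still holds $C_{i-1}$ together with the correct $\VSNc{v}$/$\VSNnc{v}$ flags; that invariant rests on the property already established before the lemma, namely that returning from the recursive call on $u_{i-1}$ restores the data structure in the same time as the forward transition, so the deeper recursion does not disturb the per-child work charged here. Granting that, the consecutiveness hypothesis of Lemma~\ref{lem:compute:diff:ind} holds at every step and the summation goes through.
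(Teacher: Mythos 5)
Your proposal is correct and takes essentially the same route as the paper's proof: compute the candidate set of the smallest generator directly via Lemma~\ref{lem:correctness:child:gen} in $\Order{k\size{\CAND{S}{G}} + k\size{\CAND{S\cup\set{u_1}}{G(S,\ge u_1)}}}$ time, obtain each subsequent set from its predecessor via Lemma~\ref{lem:compute:diff:ind} (valid because the generators are processed consecutively in sorted order), and sum, with history recording absorbed into the same bound. Your extra details---the degeneracy bound on the neighbors of $u_1$ inside $\CAND{S}{G}$ and the restoration invariant across recursive calls---only make explicit what the paper leaves implicit.
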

\begin{proof}
    From Lemma~\ref{lem:correctness:child:gen}, 
    we need $\Order{k\size{\CAND{S}{G}} + k\size{\CAND{S \cup \set{u_*}}{G(S, \ge u_*)}}}$ time for computing $\CAND{S\cup\set{u_*}}{G(S, \ge u_*)}$, 
    where $u_*$ is the smallest child generator in $\CAND{S}{G}$. 
    From Lemma~\ref{lem:compute:diff:ind}, 
    we can compute all the sets of child generators for children of $S$ except for $S\cup\set{u_*}$ 
    in $\Order{\sum_{u \in \CAND{S}{G}}k\size{\CAND{S\cup\set{u}}{G(S, \ge u)}}}$ time in total. 
    Moreover, 
    recording the modification history can be done in the same time complexity in above. 
    Hence, the statement holds. 
\end{proof}

\begin{theorem}
    \label{thm:ind}
    Given a graph $G$ with degeneracy $k$, 
    Algorithm~\ref{alg:proposed} enumerates all solutions in $\Order{k\size{\iSOL{G}}}$ total time, 
    that is,  $\Order{k}$ time per solution with 
    $\Order{m} = \Order{kn}$ space and preprocessing time. 
\end{theorem}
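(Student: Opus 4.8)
The plan is to combine the correctness already established in Lemma~\ref{lem:correctness:algorithm} with an amortized analysis of the running time built on the per-iteration bound of Lemma~\ref{lem:update:child:gen}. First I would settle the preprocessing: compute a degeneracy ordering of $G$ in linear time via the peeling procedure of~\cite{Matula:Beck:1983}, and build the doubly linked lists storing $\LN{v}$, $\VSNc{v}$, $\VSNnc{v}$, and $\USN{v}$ for every vertex together with the sorted candidate list. As argued just before Lemma~\ref{lem:at:most:one:larger:in:S}, this uses $\Order{m} = \Order{kn}$ space and time, and the modification history needed to restore the structure on backtracking fits in the same bound. I would also make explicit the standard output convention: instead of printing the whole set $S$ at each iteration (which would cost $\Order{\size{S}}$), the algorithm reports only the vertex added relative to its parent, so that emitting each solution costs $\Order{1}$.

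Next I would set up the correspondence underlying the amortization. By Lemma~\ref{lem:divide:solution:sapce} the recursion forest of the algorithm is in bijection with $\iSOL{G}$: each iteration receives a distinct bipartite vertex set $S$ and outputs it exactly once, with the top-level invocations of \Rec in \Main serving as the roots. The children of the iteration on $S$ are exactly the sets $S\cup\set{u}$ for $u \in \CAND{S}{G}$, and the grandchildren of $S$ reached through $u$ are exactly the elements of $\CAND{S\cup\set{u}}{G(S, \ge u)}$. Writing $N = \size{\iSOL{G}}$, this gives
\begin{equation*}
\sum_{S} \size{\CAND{S}{G}} \le N
\qquad\text{and}\qquad
\sum_{S} \sum_{u \in \CAND{S}{G}} \size{\CAND{S\cup\set{u}}{G(S, \ge u)}} \le N,
\end{equation*}
because the first sum counts each solution once for each parent edge and the second counts each solution once for each grandparent edge, while every solution has exactly one parent and at most one grandparent.

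With these counts the bound follows. By Lemma~\ref{lem:update:child:gen}, the work done inside the iteration on $S$ (computing the candidate sets of all its children and recording the modification history) is $\Order{k\size{\CAND{S}{G}} + \sum_{u \in \CAND{S}{G}} k\size{\CAND{S\cup\set{u}}{G(S, \ge u)}}}$, and restoring the data structure on backtracking stays within the same bound. The residual per-iteration overhead (advancing the candidate list and deleting each $u$ from $G$, the latter touching only the $\Order{k}$ larger neighbors of $u$) is also $\Order{k\size{\CAND{S}{G}}}$. Summing over all iterations, equivalently over all $S \in \iSOL{G}$, and applying the two inequalities above yields a total of $\Order{kN} = \Order{k\size{\iSOL{G}}}$, i.e.\ $\Order{k}$ amortized time per solution; the space and preprocessing claims were established in the first step.

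The step I expect to be the main obstacle is not the individual time bounds but making the charging airtight. Specifically, I must verify that the term $\sum_{u} k\size{\CAND{S\cup\set{u}}{G(S, \ge u)}}$ genuinely distributes over \emph{distinct} grandchildren, each charged by only one grandparent. This rests on the fact that the graphs $G(S, \ge u)$ shrink consistently along the recursion, so a given vertex set occurs as a candidate in exactly one $\CAND{S\cup\set{u}}{G(S, \ge u)}$; that in turn follows from the disjointness in Lemma~\ref{lem:disjoint:solution:space} together with the exact characterization of candidate sets in Lemma~\ref{lem:correctness:child:gen}, which guarantee that the recursion forest is indeed a forest with no solution generated twice. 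Once that is confirmed, the charge received by each solution is $\Order{k}$ from its parent plus $\Order{k}$ from its grandparent, completing the proof.
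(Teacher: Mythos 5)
Your proposal is correct and takes essentially the same route as the paper's proof: it combines the per-iteration bound of Lemma~\ref{lem:update:child:gen} with the $\Order{k}$ cost of deleting a vertex (touching only its larger neighbors), the $\Order{kn}$ preprocessing, and the parent/grandparent charging $\sum_{S}\left(\size{\CAND{S}{G}} + \sum_{u \in \CAND{S}{G}}\size{\CAND{S\cup\set{u}}{G(S, \ge u)}}\right) = \Order{\size{\iSOL{G}}}$. Your explicit handling of the output convention (reporting only the vertex added relative to the parent) and your verification that each solution is charged by exactly one parent and at most one grandparent merely spell out details the paper leaves implicit.
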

\begin{proof}
    From Lemma~\ref{lem:update:child:gen}, 
     we can see the larger neighbors of $u$ are always checked. 
    Thus, 
    Line~\ref{alg:proposed:remove:from:G} can be done in $\Order{k}$ time since 
    the algorithm does not need to remove edges whose endpoints are $u$ and a smaller neighbor of $u$. 
    Moreover, Line~\ref{alg:proposed:remove:from:C} can be done in $\Order{1}$ time. 
    In addition, 
    in the preprocessing, 
    we need to initialize the data structure and compute the degeneracy ordering. 
    The both need $\Order{kn}$ time and space since the number of edges is at most $kn$. 
    From Lemma~\ref{lem:update:child:gen} and the above this discussion, 
    the algorithm runs in  
    $\Order{\sum_{S\in \iSOL{G}} \sum_{u \in \CAND{S}{G}} T(S, u) }$ time. 
    Now, 
    $\Order{\sum_{S\in \iSOL{G}} \tuple{\size{\CAND{S}{G}} + \sum_{u \in \CAND{S}{G}} \size{\CAND{S\cup\set{u}}{G(S, \ge u)}}}} = 
    \Order{\size{\iSOL{G}}}$. 
    Hence, the statement holds.
\end{proof}

\begin{corollary}
    All bipartite induced subgraphs in graphs with constant degeneracy, such as planar graphs, can be listed in $\Order{1}$ time per solution 
    with $\Order{n}$ space and preprocessing time. 
\end{corollary}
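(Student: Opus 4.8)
The plan is to obtain the corollary as an immediate specialization of Theorem~\ref{thm:ind} to the regime of constant degeneracy. Theorem~\ref{thm:ind} already states that on any input graph $G$ of degeneracy $k$, Algorithm~\ref{alg:proposed} runs in $\Order{k\size{\iSOL{G}}}$ total time using $\Order{kn}$ space and preprocessing. So the only work is to substitute the hypothesis $k = \Order{1}$ into each of these three bounds and read off the claimed complexities.

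Concretely, I would argue as follows. If the input graph is drawn from a class of constant degeneracy, then $k = \Order{1}$, and therefore the per-solution bound $\Order{k}$ collapses to $\Order{1}$, while both the space bound and the preprocessing bound $\Order{kn}$ collapse to $\Order{n}$. This is exactly the statement of the corollary. No new algorithmic argument is required; the content of the corollary lives entirely in Theorem~\ref{thm:ind}, and this step is purely a matter of evaluating the stated bounds at a constant value of $k$.

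The one genuinely non-vacuous point is to justify the named example: that planar graphs do indeed have constant degeneracy. For this I would invoke the standard consequence of Euler's formula that every planar graph contains a vertex of degree at most $5$; since every subgraph of a planar graph is again planar, this minimum-degree bound holds for every subgraph, so by definition planar graphs are $5$-degenerate. Hence $k \le 5 = \Order{1}$ for planar inputs, and the general specialization above applies. More broadly, the same reasoning covers any of the constant-degeneracy classes mentioned earlier (forests, grid graphs, bounded-treewidth graphs, and fixed-$H$-minor-free graphs), so the corollary is stated for planar graphs only as a representative instance.

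I do not expect any real obstacle here: the deductive part is a one-line substitution into Theorem~\ref{thm:ind}, and the only external fact needed is the elementary bound on the degeneracy of planar graphs, which is a classical application of Euler's formula. If anything deserves care, it is merely stating clearly that the degeneracy of the class is bounded by an absolute constant independent of $n$, so that the $\Order{k}$ and $\Order{kn}$ factors are legitimately absorbed into $\Order{1}$ and $\Order{n}$ respectively.
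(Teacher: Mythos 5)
Your proposal is correct and matches the paper's intent exactly: the corollary is stated without proof precisely because it is the immediate specialization of Theorem~\ref{thm:ind} at $k = \Order{1}$, which is the substitution you carry out. Your additional justification that planar graphs are $5$-degenerate via Euler's formula is a sound (and slightly more careful) way to ground the representative example.
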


\section{Enumeration of Bipartite Subgraphs}

In this section, 
we describe our algorithm for Problem~\ref{prob:sub}. 
For a graph $G$ and a bipartite edge set $F$ of $G$, 
    let $B(G, F)$ be the set of edges $e$ of $G$ such that $F\cup \set{e}$ is not bipartite,
     i.e., $F\cup \set{e}$ has an odd cycle that includes $e$.
Let $\sSOL{G, F} = \inset{F' \in \sSOL{G}}{F \subseteq F'}$.
We can see that $\sSOL{G(F), F} = \sSOL{G(F)\setminus B(G,F), F}$.
For an edge $e$ of $G$, 
we define 
$\NE{G,e} = \inset{f \in E\setminus\set{e}}{\text{$f$ is adjacent to $e$}}$, 
$\NE{G,F} = \bigcup_{e\in F}\NE{e}\setminus F$, and 
we also define $G(F,\ge e)$ by $G \setminus \inset{ f\in \NE{G,F}}{ f<e }$.

The framework of the algorithm is the same as the algorithm for Problem~\ref{prob:ind}.
The algorithm starts from the empty edge set, 
and add edges recursively so that the edge sets generated are always connected and bipartite, and no duplication occurs. 
For given a graph $G$ and a bipartite edge set $F$ of $G$, 
the algorithm first removes edges of $B(G,F)$ from $G$, and outputs $F$ as a solution.
Then for each $e\in \NE{G, F}$, 
the algorithm generates the problems of enumerating all bipartite subgraphs that include $F\cup \set{e}$ but no edge $f<e, f\in \NE{G, F}$, 
that is, bipartite subgraphs in $G(F,\ge e)$ that includes $F\cup \set{e}$.
Before generating the recursive call, 
the algorithm computes the edges of $B(G(F,\ge e), F\cup \set{e})$ and 
removes them from  $G(F,\ge e)$ so that the computation of the iteration will be accelerated.
The correctness of our strategy for the enumeration is as follows.

\begin{algorithm}[t]
    \caption{Enumeration algorithm for bipartite edges sets}
    \label{alg:proposed:edge}
    \Procedure(){\Main{$G = (V, E)$}} {
        \ForEach(\tcc*[h]{Pick the smallest edge in $E$.}){$e \in E$}{ 
            \Rec{$G, \set{e}, N(e)$}\; 
            $G \gets G \setminus \set{e}$\; 
        }
    }
    \Subprocedure(){\Rec{$G, F, \NE{F, G}$}} {
        \Output{$F$}\; 
        \While{$\NE{F, G} \neq \emptyset$}{
            $e \gets $ the smallest child generator in $\NE{F}{G}$\;
            $F' \gets F\cup\set{e}$\; 
            $G' \gets G(F, \ge e) \setminus E(B(G(F, \ge e), F'))$\; 
            $\NE{G', F'} \gets 
                \tuple{\NE{G,F} \cup N_+(G,F,e)}  
                \setminus \tuple{N_-(G,F,e)\cup \inset{f\in E}{f\le e}}$\; 
            \Rec{$G', F', \NE{G', F'}$}\; 
            $\NE{F, G} \gets \NE{F, G}\setminus\set{e}$\;
        }
    }
\end{algorithm}

\begin{lemma}
    \label{lem:disjoint:solution:sapce:E}
    $\sSOL{G(F,\ge e), F\cup\set{e}} \cap \sSOL{G(F,\ge f), F\cup\set{f}} = \emptyset $ for any $e\ne f$ of $\NE{G,F}$.
\end{lemma}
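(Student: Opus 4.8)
The plan is to mirror the argument used for the vertex case in Lemma~\ref{lem:disjoint:solution:space}, since the edge setting is structurally identical: disjointness of the recursively generated subproblems comes entirely from the pruning performed by the $G(F,\ge\cdot)$ operator. First I would fix an arbitrary pair $e \neq f$ in $\NE{G,F}$ and assume without loss of generality that $e < f$ in the chosen edge ordering. The goal is then to show that no bipartite edge set can simultaneously lie in both $\sSOL{G(F,\ge e), F\cup\set{e}}$ and $\sSOL{G(F,\ge f), F\cup\set{f}}$.

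Suppose for contradiction that $F'$ belongs to this intersection. Membership in the two solution collections forces $F' \supseteq F \cup \set{e}$ and $F' \supseteq F \cup \set{f}$, so $F'$ contains both $e$ and $f$. The key observation is the definition $G(F,\ge f) = G \setminus \inset{g \in \NE{G,F}}{g < f}$: since $e \in \NE{G,F}$ and $e < f$, the edge $e$ is deleted when forming $G(F,\ge f)$, hence $e$ is not an edge of $G(F,\ge f)$. But any $F' \in \sSOL{G(F,\ge f), F\cup\set{f}}$ is an edge subset of $G(F,\ge f)$, so $e \notin F'$, contradicting $e \in F'$. Therefore the intersection is empty.

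The only point requiring care is confirming that $e$ is genuinely removed when passing to $G(F,\ge f)$; this hinges on $e$ being an element of $\NE{G,F}$ (true by hypothesis, as we range $e$ over $\NE{G,F}$) together with $e < f$, both of which hold by our WLOG choice. I expect no real obstacle beyond this bookkeeping: no analogue of the child-generator update machinery is needed, and the disjointness follows as a direct consequence of the pruning performed by the $G(F,\ge\cdot)$ operator, exactly as in the vertex case.
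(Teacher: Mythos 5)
Your proof is correct and follows essentially the same route as the paper: assume WLOG $e < f$, note that any $F'$ in the intersection must contain both edges, and derive a contradiction from the fact that $e \in \NE{G,F}$ with $e < f$ is deleted in forming $G(F,\ge f)$, so no edge set over $G(F,\ge f)$ can contain $e$. The paper's version is terser (and uses leftover notation $e_i$, $e_j$, $G_j$ from an earlier draft) and appends a side remark that removing $B(G,F)$ keeps every addition of an edge of $\NE{G,F}$ bipartite, but the core disjointness argument is identical to yours.
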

\begin{proof}
    Suppose that $F'$ is a bipartite edge set in $\sSOL{G(F, \ge e), F\cup\set{e}} \cap \sSOL{G(F, \ge f), F\cup\set{f}}$. 
    This implies $F'$ includes $e_i$ and $e_j$. 
    However, $G_j$ does not include $e_i$, and thus this contradicts the assumption.
    Note that the addition of an edge of $\NE{G,F}$ never results non-bipartite since we removed all edges of $B(G,F)$ from $G$.
\end{proof}

\begin{lemma}
    \label{lem:divide:solution:sapce:E}
    $\sSOL{G, F} = \set{F} \cup \bigsqcup_{e\in \NE{G,F}} \sSOL{G(F,\ge e), F\cup\set{e}}$, 
\end{lemma}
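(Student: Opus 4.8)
The plan is to mirror the proof of the vertex version, Lemma~\ref{lem:divide:solution:sapce}, replacing vertex additions by edge additions, and to read off the disjointness of the union directly from Lemma~\ref{lem:disjoint:solution:sapce:E}. Since that lemma already shows the pieces $\sSOL{G(F,\ge e), F\cup\set{e}}$ are pairwise disjoint (so the symbol $\bigsqcup$ is justified), it remains only to prove the underlying set equality, which I would do by establishing both inclusions.

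First I would dispatch the easy inclusion $\supseteq$. If $F'$ equals $F$ it lies in $\sSOL{G, F}$ trivially; otherwise $F' \in \sSOL{G(F,\ge e), F\cup\set{e}}$ for some $e \in \NE{G,F}$, so $F'$ is a connected bipartite edge set containing $F \cup \set{e} \supseteq F$. Because $G(F,\ge e)$ is obtained from $G$ solely by deleting edges, $F'$ remains a connected bipartite edge set of $G$, whence $F' \in \sSOL{G, F}$.

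The substantive direction is $\subseteq$. I would take an arbitrary $F' \in \sSOL{G, F}$ with $F' \supsetneq F$ (the case $F' = F$ being absorbed by $\set{F}$) and locate the correct index edge. The key structural observation is that, since $F'$ is connected and properly contains the non-empty connected set $F$, some edge of $F' \setminus F$ must share an endpoint with $V[F]$ and therefore belong to $\NE{G,F}$; otherwise $F$ and $F' \setminus F$ would occupy distinct components of $G[F']$, contradicting connectivity. Letting $e$ be the smallest edge of $\NE{G,F}$ contained in $F'$, minimality gives that $F'$ contains no edge $f \in \NE{G,F}$ with $f < e$. Since the edges deleted from $G$ to build $G(F,\ge e)$ are exactly those of $\NE{G,F}$ smaller than $e$, none of them lies in $F'$, so $F' \subseteq E(G(F,\ge e))$; combined with $F' \supseteq F\cup\set{e}$ and $F'$ being a connected bipartite edge set, this yields $F' \in \sSOL{G(F,\ge e), F\cup\set{e}}$, as required.

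I expect the single delicate point to be the existence argument in the $\subseteq$ direction, namely deducing from connectivity that a proper connected superset of $F$ must reuse a vertex of $V[F]$ and hence contain an edge of $\NE{G,F}$. Once the smallest such edge $e$ is fixed, the membership $F' \subseteq E(G(F,\ge e))$ follows immediately from the definition of $G(F,\ge e)$, exactly paralleling the vertex case in Lemma~\ref{lem:divide:solution:sapce}.
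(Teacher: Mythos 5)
Your proposal is correct and follows essentially the same route as the paper's proof: pick the smallest edge $e$ of $F' \cap \NE{G,F}$, whose existence follows from connectivity of $F'$, note that minimality ensures $F'$ survives in $G(F,\ge e)$, and obtain disjointness from Lemma~\ref{lem:disjoint:solution:sapce:E}. You merely spell out details the paper leaves implicit, notably the $\supseteq$ inclusion and the argument that connectivity forces some edge of $F'\setminus F$ to be adjacent to $F$.
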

\begin{proof}
    Let $F', F\subset F'$ be a bipartite edge set in $\sSOL{G, F}$, and $e$ be the smallest edge among $F'\cap \NE{G,F}$.
    $e$ always exists since $F'$ is connected.
    We can see that $F'\in \sSOL{G(F,\ge e), F\cup\set{e}}$, thus the statement holds.
\end{proof}

For the efficient computation, our algorithm always keep $\NE{G,F}$ in the memory and update and pass it to the recursive calls.
For the efficient update of $\NE{G,F}$, we keep the graph $G\setminus F$ in the memory since edges of $F$ never be added to $\NE{G,F}$, until the completion of the iteration.
We also put a label of ``1'' or ``2'' to each vertex in $V(F)$ and update so that each edge of $F$ connects vertices of different labels, that is always possible since $F$ is bipartite.
For a vertex $v$ of $G$ that is not in $V(F)$, let $N_1(G, F, v)$ (resp., $N_2(G, F, v)$) be the set of edges $f$ of $N(v)\setminus F$ such that the endpoint of $f$ other than $v$ has label ``1'' (resp., ``2'').
We also keep and update $N_1(G, F, v)$ and $N_2(G, F, v)$.
For an edge $(u,v)\in \NE{G, F}$ such that $u \notin V(F)$, we define $N_+(G, F, (u,v))$ by $N_1(G, F, v)$ and $N_-(G, F, (u,v))$ by $N_2(G, F, v)$ if the label of $v$ 
is ``1'', and $N_+(G, F, (u,v))$ by $N_2(G, F, v)$ and $N_-(G, F, (u,v))$ by $N_1(G, F, v)$ otherwise.
We define $N_1(G, F, (u,v))$ and $N_2(G, F, (u,v))$ by the empty set if both $u$ and $v$ are in $V(F)$

For an edge $e\in \NE{G,F}$, let $F' = F\cup\set{e}$ and $G'$ be the graph obtained from $G(F, \ge e)$ by removing edges of $B(G(F, \ge e), F')$.

\begin{lemma}
    \label{lem:comp:B}
    Suppose that $B(G,F) = \emptyset$ and $e=(u,v)$.
    Then, $B(G', F') = N_-(G, F, v)$.
\end{lemma}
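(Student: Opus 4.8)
The plan is to prove the claim by characterizing directly the set $B(G(F,\ge e),F')$ of edges removed from $G(F,\ge e)$ to form $G'$, using the explicit proper $2$-coloring of $V(F')$ that the algorithm maintains. The starting observation is that, since $F'$ is a connected bipartite edge set, its vertex set splits (uniquely up to swapping the two labels) into the label-``$1$'' and label-``$2$'' classes; consequently an edge $f$ of $G(F,\ge e)$ closes an odd cycle with $F'$ if and only if both endpoints of $f$ lie in $V(F')$ and carry the same label. I would establish this equivalence first, because every later step amounts to reading off labels.

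Next I would use the hypothesis $B(G,F)=\emptyset$ to discard all ``old'' edges at once. Since no edge of $G$ closes an odd cycle with $F$, there is no edge of $G$ joining two equally labeled vertices of $V(F)$. Hence, when passing from $F$ to $F'=F\cup\set{e}$, the labeling of $V(F)$ is unchanged, and the only vertex whose incident edges can change bad/good status is the endpoint of $e$ that is newly inserted into $V(F')$. This splits the argument into the case where $e=(u,v)$ has both endpoints already in $V(F)$ and the case where it has exactly one.

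If both endpoints lie in $V(F)$, then $V(F')=V(F)$, so by the equivalence $B(G(F,\ge e),F')$ would consist of equally labeled edges inside $V(F)$, of which there are none; this matches the convention $N_-=\emptyset$. In the principal case, say $u\notin V(F)$ and $v\in V(F)$, the algorithm gives $u$ the label opposite to that of $v$. By the equivalence a newly bad edge must be incident to $u$ and run to a vertex of $V(F)$ carrying $u$'s label, and these are precisely the edges gathered in $N_-(G,F,v)$ (the list $N_2$ or $N_1$ selected by the label of $v$). I would then check both inclusions: every edge of $N_-$ genuinely produces an odd cycle, whereas every edge from $u$ to an opposite-label vertex of $V(F)$ stays bipartite and instead enters $N_+$, and every edge from $u$ to a vertex outside $V(F')$ meets $V(F')$ only at $u$ and so cannot be bad. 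A final bookkeeping step reconciles this with the ordering restriction encoded in $G(F,\ge e)$, which discards candidates smaller than $e$.

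The main obstacle I anticipate is the correct handling of the label of the newly inserted vertex and the exact matching of the resulting edge set to the maintained lists $N_1(G,F,\cdot)$ and $N_2(G,F,\cdot)$: one must verify that ``same label as $u$'' is exactly the class picked out by $N_-$, and that soundness (every edge of $N_-$ is bad) together with completeness (every new bad edge belongs to $N_-$) both hold, including the degenerate case in which both endpoints of $e$ already lie in $V(F)$. Once the $2$-coloring equivalence is fixed, the remaining points—the unchanged status of edges internal to $V(F)$ and the interaction with the $\ge e$ filtering—are routine.
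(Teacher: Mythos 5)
Your proposal is correct and takes essentially the same route as the paper's own proof: the hypothesis $B(G,F)=\emptyset$ localizes any newly bad edge to the endpoint of $e$ that is not yet covered by $F$, and badness is then read off as ``both endpoints carry the same label,'' identifying $B(G',F')$ with the list selected by $N_-$. Your explicit handling of the degenerate case where both endpoints of $e$ already lie in $V(F)$ (matching the empty-set convention) and of the $\ge e$ filtering merely spells out details that the paper's terse proof leaves implicit.
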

\begin{proof}
    Since $B(G,F) = \emptyset$, any edge $f$ in $B(G', F')$ must share one of its endpoint with $e$, and the endpoint is adjacent to no edge of $F$.
    Further, the edge is not included in $F$.
    The addition of $f$ to $F$ generates an odd cycles if and only if the label of both endpoints of $f$ are the same.
    Therefore the statement holds.
\end{proof}

The following lemma shows that the computation of $\NE{G', F'}$
 from $\NE{G, F}$ can be also done in $\Order{\size{N_+(G, F, e)}+1}$ time.
 
\begin{lemma}
    \label{lem:comp:NE}
    $\NE{G', F'} = \tuple{\NE{G,F} \cup N_+(G,F,e)} \setminus \tuple{N_-(G,F,e)\cup \inset{f\in E}{f\le e}}.$ 
\end{lemma}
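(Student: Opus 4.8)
The plan is to prove the set identity by double inclusion, splitting the edges of $G'$ according to their incidence with the new edge $e=(u,v)$. Throughout I may use the standing invariant $B(G,F)=\emptyset$ (the same hypothesis under which Lemma~\ref{lem:comp:B} is stated), so that every edge adjacent to $F$ is already bipartite-compatible with $F$. First I would unfold the definition: $\NE{G', F'}$ is exactly the set of edges of $G'$ that are adjacent to $F'$ and not contained in $F'$, where $G' = G(F, \ge e) \setminus B(G(F, \ge e), F')$ and $G(F, \ge e) = G \setminus \inset{f \in \NE{G,F}}{f < e}$. Since $F' = F \cup \set{e}$, the vertex set grows only by the endpoint $u \notin V(F)$, so $V(F') = V(F) \cup \set{u}$, and an edge is adjacent to $F'$ if and only if it is adjacent to $F$ or incident to $u$. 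This dichotomy drives the entire case analysis. (The degenerate case in which both endpoints of $e$ already lie in $V(F)$ is immediate: no new vertex appears, $N_+(G,F,e)=N_-(G,F,e)=\emptyset$, and the identity reduces to deleting $\inset{f}{f \le e}$ from $\NE{G,F}$.)

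For the forward inclusion I would take $g \in \NE{G', F'}$. Because $g$ survives in $G'$ it is neither one of the smaller candidates deleted when forming $G(F,\ge e)$ nor $e$ itself, so $g \notin \inset{f \in E}{f \le e}$; and because $g$ survives the deletion of $B(G(F,\ge e),F')$, I invoke Lemma~\ref{lem:comp:B}, which gives $B(G', F') = N_-(G,F,e)$, to conclude $g \notin N_-(G,F,e)$. It then remains to place $g$ in $\NE{G,F} \cup N_+(G,F,e)$: if $g$ is adjacent to $F$, then since $B(G,F)=\emptyset$ it was already a candidate of $F$, so $g \in \NE{G,F}$; otherwise $g$ must be incident to the new vertex $u$ with its other endpoint outside $V(F)$, whence it is one of the edges newly made eligible, i.e. $g \in N_+(G,F,e)$.

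For the reverse inclusion I would take $g$ in $\tuple{\NE{G,F} \cup N_+(G,F,e)} \setminus \tuple{N_-(G,F,e) \cup \inset{f \in E}{f \le e}}$ and verify the three defining properties of $\NE{G', F'}$. Membership of $g$ in $\NE{G,F}$ or $N_+(G,F,e)$ guarantees that $g$ is adjacent to $F'$ and keeps the edge set connected (in the second case because $g$ is incident to $u \in V(F')$); excluding $\inset{f \in E}{f \le e}$ guarantees $g \neq e$ and that $g$ is not one of the candidates already removed in forming $G(F, \ge e)$; and excluding $N_-(G,F,e)$, which equals $B(G', F')$ by Lemma~\ref{lem:comp:B}, guarantees both that $g$ is not deleted when removing $B(G(F, \ge e), F')$ and that $F' \cup \set{g}$ remains bipartite. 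Together these place $g$ in $\NE{G', F'}$.

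The main obstacle I anticipate is the precise bookkeeping at the new vertex $u$: one must argue that the edges incident to $u$ split cleanly into those preserving bipartiteness (the additions $N_+(G,F,e)$) and those closing an odd cycle with $F'$ (the deletions $N_-(G,F,e)$), and that the edges already adjacent to $F$ are correctly retained or discarded according to the labelling. Lemma~\ref{lem:comp:B} carries the weight on the deletion side, so the remaining care is in (i) confirming that no edge is simultaneously placed in the added and the removed parts, and (ii) checking that the ordering term $\inset{f \in E}{f \le e}$ excises exactly $e$ together with the smaller candidates already absent from $G(F, \ge e)$, so that no candidate is double-counted when the union and the set difference are combined.
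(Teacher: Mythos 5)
Your proof takes essentially the same route as the paper's: a double inclusion in which the forward direction uses survival in $G' = G(F,\ge e)\setminus B(G(F,\ge e),F\cup\set{e})$ together with Lemma~\ref{lem:comp:B} (identifying the deleted edges with $N_-(G,F,e)$) to discard $N_-(G,F,e)\cup\inset{f\in E}{f\le e}$, while the reverse direction checks that $f$ is adjacent to $F'$, exceeds $e$, and closes no odd cycle. You are if anything more explicit than the paper (handling the degenerate case with both endpoints of $e$ in $V(F)$ and the disjointness of the added and removed parts), and even your least-justified step --- inferring $g\notin\inset{f\in E}{f\le e}$ solely from $g$ surviving in $G'$ --- is precisely the paper's own inference (``$f>e$ from $f\in G'$''), so nothing is missing relative to the published argument.
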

\begin{proof}
    We first prove $\NE{G', F'} \subseteq \tuple{\NE{G,F} \cup N_+(G,F,e)} \setminus \tuple{N_-(G,F,e)\cup \inset{f\in E}{f\le e}}$. 
    Let $f$ be an edge in $\NE{G', F'}$. 
    Then, $f$ is in either $\NE{G,F}$ or $N_+(G,F,e)$.
    From the definition of $G(F, \ge e)$ and $B(G', F')$, $f$ is not in $B(G', F')$.
    Further, $f>e$ from $f\in G'$.
    This implies that $f \in \tuple{\NE{G,F} \cup N_+(G,F,e)} \setminus \tuple{N_-(G,F,e)\cup \inset{f\in E}{f\le e}}$. 

    We next prove $\tuple{\NE{G,F} \cup N_+(G,F,e)} \setminus \tuple{N_-(G,F,e)\cup \inset{f\in E}{f\le e}} \supseteq \NE{G', F'}.$
    Suppose that $f \in \tuple{\NE{G,F} \cup N_+(G,F,e)} \setminus \tuple{N_-(G,F,e)\cup \inset{f\in E}{f\le e}}$. 
    Then, $f$ is not in $F'$, and adjacent to an edge of $F'$.
    Further, $f>e$ and the addition of $f$ to $F'$ generates no odd cycle.
    Thus, $f \in \NE{G', F'}$. 
\end{proof}

When we generate $G(F, \ge e)$ for each $e\in \NE{G,F}$ one by one in increasing order, the total computation time is $\Order{ \size{\NE{G,F}} }$.
Computation of $G'\setminus F'$ is at most the time to compute
$G'$.
From this together with these lemmas, we can see that an iteration of the algorithm spends $\Order{\size{\NE{G,F}} + \sum_{e \in \NE{G, F}} \size{N_+(G, F, e)}}$ time.
The following lemma bound this complexity in another way.
Let $G_e$ is the graph obtained from $G(F\cup\set{e}, \ge e)$ by removing edges of $B(G(F\cup\set{e}, \ge e), F\cup\set{e})$.

\begin{lemma}
    \label{lem:time:NE}
    Suppose that $e'$ is next to $e$ in the edge ordering in $\NE{G,F}$. 
    The computation of $\NE{G_{e'}, F\cup\set{e'}}$ from $\NE{G_e, F\cup\set{e}}$ can be done in $\Order{\size{\NE{G_{e'}, F\cup\set{e'}}}+\size{\NE{G_e, F\cup\set{e}}}}$ time.
\end{lemma}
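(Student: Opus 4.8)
The plan is to mirror the argument of Lemma~\ref{lem:compute:diff:ind}, with edges in place of vertices, and to drive it by the closed form of Lemma~\ref{lem:comp:NE}. Write $C_e = \NE{G_e, F\cup\set{e}}$ and $C_{e'} = \NE{G_{e'}, F\cup\set{e'}}$ for the two child-generator sets, and let $z_e$ (resp. $z_{e'}$) be the endpoint of $e$ (resp. $e'$) outside $V(F)$, if such an endpoint exists. By Lemma~\ref{lem:comp:NE}, both $C_e$ and $C_{e'}$ arise from the \emph{same} set $\NE{G,F}$ by inserting the new edges $N_+(G,F,\cdot)$, deleting the forbidden edges $N_-(G,F,\cdot)$, and discarding every edge whose index is at most the chosen one. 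First I would observe that, since $e'$ is the immediate successor of $e$ in the ordering of $\NE{G,F}$, no edge of $\NE{G,F}$ lies strictly between $e$ and $e'$; comparing the two instances of the formula then shows that every edge of the symmetric difference $C_e \triangle C_{e'}$ is either $e'$ itself or is incident to $z_e$ or to $z_{e'}$. This confines the whole update to the neighbourhoods of the two newly attached endpoints.

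Next I would realise the transition in two passes. The \emph{deletion pass} makes a single scan over the stored list $C_e$ and discards exactly three kinds of edges: $e'$ (the new threshold); the edges of $N_+(G,F,e)$, which run from $z_e$ to outside $V(F)$ and therefore cease to be adjacent to $F\cup\set{e'}$; and the edges of $N_-(G,F,e')$, which run from $z_{e'}$ to a vertex of $V(F)$ sharing $z_{e'}$'s label and so close an odd cycle in $F\cup\set{e'}$. The \emph{insertion pass} adds the edges that are forbidden under $e$ but admissible under $e'$: scanning the parts of the maintained lists $N_1(G,F,z_e)$ and $N_2(G,F,z_e)$ with index greater than $e'$ yields every edge from $z_e$ to $V(F)$ that exceeds the threshold, of which the missing ones are inserted, and the generated set $N_+(G,F,e')$ supplies the new edges emanating from $z_{e'}$. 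The correctness of both passes rests on the fact that $z_e \notin V(F\cup\set{e'})$ and $z_{e'}\notin V(F\cup\set{e})$, so that an edge leaving either endpoint toward $V(F)$ with sufficiently large index never creates an odd cycle; I would check the four label cases exactly as in the second half of the proof of Lemma~\ref{lem:correctness:child:gen}.

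The step I expect to be the main obstacle is keeping the cost proportional to $\size{C_e}+\size{C_{e'}}$ rather than to the degrees of $z_e$ and $z_{e'}$, which may be far larger: both the forbidden sets $N_-(G,F,\cdot)$ and the edges leaving $V(F)$ can be numerous, yet most of them are irrelevant because they lie below the threshold or stay non-adjacent to the current edge set, and a naive rescan of $\NE{G,F}$ or of a full adjacency list would blow the bound. The resolution is precisely the asymmetry of the two passes above: every edge removed is touched only while scanning $C_e$, hence charged to $\size{C_e}$, and every edge examined during insertion lies in the part of $z_e$'s neighbour lists with index greater than $e'$ or in $N_+(G,F,e')$, each of which is a genuine child generator of $F\cup\set{e'}$ and hence charged to $\size{C_{e'}}$. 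This accounting requires only that $N_1$ and $N_2$ be stored sorted by edge index, an invariant maintained throughout the recursion, and that the degenerate cases where $z_e$, $z_{e'}$ coincide or fail to exist be folded in (they only decrease the work). Putting the two passes together yields the claimed $\Order{\size{\NE{G_{e'}, F\cup\set{e'}}}+\size{\NE{G_e, F\cup\set{e}}}}$ bound.
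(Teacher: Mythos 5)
Your overall route is the same as the paper's: both proofs realize the transition as undoing the update of Lemma~\ref{lem:comp:NE} for $e$ and redoing it for $e'$, confine all modified edges to those incident to the two outside endpoints, and charge the $N_+$-type work to $\NE{G_e, F\cup\set{e}}$ and $\NE{G_{e'}, F\cup\set{e'}}$ respectively. Your cross-charging of the deleted $N_-(G,F,e')$-edges to the scan of $\NE{G_e, F\cup\set{e}}$ and of the re-inserted $N_-(G,F,e)$-edges to $\NE{G_{e'}, F\cup\set{e'}}$ is exactly the paper's pair of containments $N_-(G,F,e')\setminus\inset{f\in E}{f\le e'} \subseteq \NE{G_e, F\cup\set{e}}$ and $N_-(G,F,e)\setminus\inset{f\in E}{f\le e} \subseteq \NE{G_{e'}, F\cup\set{e'}}$.

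There is, however, one genuine gap, and it is precisely the case the paper's proof singles out. Your insertion-pass accounting assumes that every edge touched while scanning the suffixes of $N_1(G,F,z_e)$ and $N_2(G,F,z_e)$ above the threshold is a child generator of $F\cup\set{e'}$. This holds when $z_e \neq z_{e'}$ (and trivially when one of them fails to exist), but it fails when $e$ and $e'$ share their outside endpoint, $z_e = z_{e'} = z$, and the $V(F)$-endpoints of $e$ and $e'$ carry the \emph{same} label, i.e., when $N_-(G,F,e) = N_-(G,F,e')$. In that case one of the two label lists at $z$ consists entirely of edges that are forbidden under both $F\cup\set{e}$ and $F\cup\set{e'}$: they belong to neither $\NE{G_e, F\cup\set{e}}$ nor $\NE{G_{e'}, F\cup\set{e'}}$, yet your pass scans all of them above the threshold, and their number can be arbitrarily large compared with $\size{\NE{G_e, F\cup\set{e}}} + \size{\NE{G_{e'}, F\cup\set{e'}}}$. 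Your label checks keep the output correct, but the time bound breaks, so your remark that the degenerate cases ``only decrease the work'' is wrong for this one. The paper's proof handles it explicitly by the dichotomy: either $N_-(G,F,e) \neq N_-(G,F,e')$, whence the two sets are disjoint and the cross-containments above apply, or $N_-(G,F,e) = N_-(G,F,e')$, whence they cancel out and must not be processed at all. The repair is cheap: test in $\Order{1}$ time whether $e$ and $e'$ attach the same outside vertex with the same label orientation, and if so skip that label list entirely instead of scanning it. With this test added to your insertion pass, your argument goes through and coincides with the paper's.
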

\begin{proof}
The computation is to recover $\NE{G,F}\setminus \inset{f\in E}{f\le e}$ from $\NE{G_e, F\cup\set{e}}$ and construct $\NE{G_{e'}, F\cup\set{e'}}$ from it.
From Lemma~\ref{lem:comp:NE}, its time is linear in 
$\size{N_+(G,F,e)\setminus \inset{f\in E}{f\le e}} + \size{N_-(G,F,e)\setminus \inset{f\in E}{f\le e}} + \size{N_+(G,F,e')\setminus \inset{f\in E}{f\le e}} + \size{N_-(G,F,e')\setminus \inset{f\in E}{f\le e}}$.
We see that $N_+(G,F,e)\setminus \inset{f\in E}{f\le e}\subseteq \NE{G_e, F\cup\set{e}}$, and $N_+(G,F,e')\setminus \inset{f\in E}{f\le e'}\subseteq \NE{G_{e'}, F\cup\set{e'}}$.
When $N_-(G,F,e)\ne N_-(G,F,e')$, we have 
$N_-(G,F,e)\cap N_-(G,F,e') = \emptyset$, thus 
$N_-(G,F,e)\setminus \inset{f\in E}{f\le e}\subseteq \NE{G_{e'}, F\cup\set{e'}}$, and $N_-(G,F,e')\setminus \inset{f\in E}{f\le e'}\subseteq \NE{G_e, F\cup\set{e}}$
thus the statement holds.
When $N_-(G,F,e) = N_-(G,F,e')$, they are canceled out and no need of taking care in the computation, thus the statement also holds.
\end{proof}

\begin{lemma}
    \label{lem:time2:NE}
    For any iteration inputting $G$ and $F$ such that $B(G, F)=\emptyset$, its computation time is at most proportional to one plus the number of its children and the grandchildren.
\end{lemma}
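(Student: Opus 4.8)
The plan is to bound the running time of a single iteration by three quantities---a constant, the number of its children, and the number of its grandchildren---each of which we can read off from the sizes of the child-generator sets. First I would fix the input $(G,F)$ with $B(G,F) = \emptyset$ and identify the relevant combinatorial objects. Since $B(G,F) = \emptyset$, every $e \in \NE{G,F}$ yields a bipartite edge set $F\cup\set{e}$ and hence a genuine recursive call, so the number of children equals $\size{\NE{G,F}}$. The child generated by $e$ receives the set $\NE{G_e, F\cup\set{e}}$, so its own children---the grandchildren of the current iteration---number $\size{\NE{G_e, F\cup\set{e}}}$; summing over $e$, the number of grandchildren is $\sum_{e \in \NE{G,F}} \size{\NE{G_e, F\cup\set{e}}}$.

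Next I would account for the work done inside the \texttt{while} loop. The loop body runs once per $e \in \NE{G,F}$, so the loop control and the per-edge bookkeeping (outputting $F$, extracting the smallest edge, deleting $e$ from $\NE{G,F}$) contribute $\Order{1 + \size{\NE{G,F}}}$ time, which is $\Order{1 + \text{children}}$. The remaining and dominant cost is the construction of the child-generator sets $\NE{G_e, F\cup\set{e}}$ for all $e$. The naive route---computing each from scratch via Lemma~\ref{lem:comp:NE}---costs $\sum_{e} \size{N_+(G,F,e)}$, which is \emph{not} directly bounded by the grandchildren, since $N_+(G,F,e)$ may contain many edges that are afterwards discarded (those that are $\le e$ or that lie in $B(G',F')$). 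I would therefore process the edges in increasing order and build each set incrementally from its predecessor.

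The key step is then to invoke Lemma~\ref{lem:time:NE}: for consecutive $e, e'$ in $\NE{G,F}$, the transition from $\NE{G_e, F\cup\set{e}}$ to $\NE{G_{e'}, F\cup\set{e'}}$ takes $\Order{\size{\NE{G_{e'}, F\cup\set{e'}}} + \size{\NE{G_e, F\cup\set{e}}}}$ time, while the first set is obtained from $\NE{G,F}$ in $\Order{\size{\NE{G,F}} + \size{\NE{G_{e_1}, F\cup\set{e_1}}}}$ time. Summing these bounds along the ordered scan makes the costs telescope: each size $\size{\NE{G_e, F\cup\set{e}}}$ is charged at most twice (once as the source and once as the target of a transition), so the total is $\Order{\size{\NE{G,F}} + \sum_{e} \size{\NE{G_e, F\cup\set{e}}}}$, i.e. $\Order{\text{children} + \text{grandchildren}}$. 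Combining this with the loop overhead and the constant cost incurred even when $\NE{G,F} = \emptyset$ yields the claimed $\Order{1 + \text{children} + \text{grandchildren}}$ bound.

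I expect the main obstacle to be exactly the gap between the per-edge cost $\size{N_+(G,F,e)}$ and the number of grandchildren through $e$: the incremental scheme must never pay again for a discarded edge. Making this rigorous means checking that the amortized cost of removing the edges $\inset{f \in E}{f \le e}$ as $e$ advances is absorbed into the single $\Order{\size{\NE{G,F}}}$ term, and verifying that the $N_-(G,F,e) = N_-(G,F,e')$ cancellation noted in Lemma~\ref{lem:time:NE} leaves no residual cost---so that every unit of work is charged either to a child or to a surviving grandchild, each a constant number of times.
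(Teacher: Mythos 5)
Your proof is correct and takes essentially the same route as the paper: charge $\Order{\size{\NE{G,F}} + \size{\NE{G_{e_1}, F\cup\set{e_1}}}}$ for the first recursive call and invoke Lemma~\ref{lem:time:NE} for each consecutive transition, so that the telescoping sum charges each grandchild set at most twice. Your explicit treatment of the gap between $\size{N_+(G,F,e)}$ and the grandchild count, and of the $N_-$ cancellation, merely spells out what the paper's terser proof leaves implicit.
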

\begin{proof}
For the first recursive call with respect to an edge $e$,
we pay computation time of $\Order{\size{\NE{G, F}} + \size{\NE{G_e, F\cup\set{e}}}}$.
For the remaining recursive calls, as we see in Lemma~\ref{lem:time:NE}, the computation time is linear in the number of grandchildren generated in the recursive call, and that generated just before.
Thus, the statement holds.
\end{proof}

Sine any iteration requires at most $\Order{|V|+|E|}$ space.
When the iteration generates a recursive call, the graphs and variants that the iteration is using has to be recovered, just after the termination of the recursive call.
This can be done just keeping the vertices and edges that are removed to make the input graph of the recursive call.
Thus, the total accumulated space spent by all its ancestors is at most $\Order{|V|+|E|}$.
Therefore, we obtain the following theorem.

\begin{theorem}
    \label{thm:sub}
    All bipartite subgraphs in a graph $G=(V,E)$ can be listed in $\Order{\size{\sSOL{G}}}$ total time, 
    that is, $\Order{1}$ time per solution with 
    $\Order{|V|+|E|}$ space. 
\end{theorem}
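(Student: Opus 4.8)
The plan is to establish the claimed total time bound by combining the amortized analysis from Lemma~\ref{lem:time2:NE} with a charging argument, and then to verify the space bound separately. First I would observe that the algorithm's correctness follows immediately from Lemmas~\ref{lem:disjoint:solution:sapce:E} and~\ref{lem:divide:solution:sapce:E}: the solution space $\sSOL{G, F}$ is partitioned into $\set{F}$ together with the disjoint union of the child solution spaces $\sSOL{G(F,\ge e), F\cup\set{e}}$, and the removal of $B(G,F)$ ensures every edge addition preserves bipartiteness. Hence every bipartite subgraph is output exactly once, and each iteration corresponds to exactly one solution.

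For the running time, the key is to bound the total work summed over all iterations. By Lemma~\ref{lem:time2:NE}, each iteration inputting $G$ and $F$ with $B(G,F)=\emptyset$ spends time at most proportional to one plus the number of its children and grandchildren. I would sum this bound over all iterations. The ``one'' term contributes $\Order{\size{\sSOL{G}}}$ in total, since there is a bijection between iterations and solutions. For the ``children'' term, each child iteration is charged once to its parent, so the total is again $\Order{\size{\sSOL{G}}}$. The subtle part is the ``grandchildren'' term: each grandchild is charged once to its grandparent, but since every iteration has a single grandparent, the total charge to all grandparents is still $\Order{\size{\sSOL{G}}}$. Summing the three contributions gives $\Order{\size{\sSOL{G}}}$ total time, i.e., $\Order{1}$ per solution.

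For the space bound, I would invoke the discussion preceding the theorem: a single iteration uses $\Order{|V|+|E|}$ space to store $G\setminus F$, the set $\NE{G,F}$, the vertex labels, and the neighbor lists $N_1, N_2$. The crucial observation is that the algorithm modifies these structures in place when recursing and restores them upon backtracking, recording only the removed vertices and edges needed for each recursive call. Since each edge or vertex is removed at most once along any root-to-leaf path in the recursion tree, the accumulated history over all ancestors of any iteration is also $\Order{|V|+|E|}$. Combining the time and space bounds yields the statement.

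The main obstacle I anticipate is making the grandchildren charging argument airtight. The amortization in Lemma~\ref{lem:time:NE} pays for the transition between consecutive children using the \emph{grandchildren} of the current iteration, which means the cost is not local to a single child but spread across the grandchild generation. I would need to check carefully that each grandchild is charged a constant number of times and that the ``just before'' term in Lemma~\ref{lem:time2:NE} does not cause a grandchild to be charged by two distinct iterations. The telescoping structure of Lemma~\ref{lem:time:NE}, where $\size{\NE{G_e, F\cup\set{e}}}$ appears in bounding both the transition into $e$ and out of $e$, should ensure each term is counted $\Order{1}$ times, but confirming this requires tracking the roles of $e$ and $e'$ precisely across successive transitions.
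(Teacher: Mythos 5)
Your proposal is correct and follows essentially the same route as the paper: correctness via Lemmas~\ref{lem:disjoint:solution:sapce:E} and~\ref{lem:divide:solution:sapce:E}, the time bound obtained by summing Lemma~\ref{lem:time2:NE} over all iterations with each iteration, child, and grandchild charged $\Order{1}$ times, and the space bound via the in-place modification history whose accumulated size along any root-to-leaf path is $\Order{|V|+|E|}$. If anything, you make the grandchild-charging step more explicit than the paper does, correctly noting via Lemma~\ref{lem:time:NE} that each candidate-set size $\size{\NE{G_e, F\cup\set{e}}}$ appears only in the two transitions into and out of the child $e$, so each grandchild is charged a constant number of times.
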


\bibliographystyle{abbrv}
\bibliography{wasa}
\end{document}